\documentclass[10pt]{article}

\usepackage[top=1in, bottom=1.5in, left=1in, right=1in]{geometry}
\usepackage{authblk} 
\date{}
\usepackage{amsmath,amsthm,amssymb,amsfonts}
\usepackage[braket, qm]{qcircuit}
\usepackage[backref]{hyperref}
\usepackage{bm}
\newtheorem{thm}{Theorem}
\newtheorem{lem}{Lemma}
\newtheorem{remark}{Remark}
\newtheorem{cor}{Corollary}
\usepackage{algorithm}
\usepackage{appendix}
\usepackage{algorithmicx}
\usepackage{algpseudocode}
\usepackage{graphicx}
\usepackage{color}

\definecolor{lzcol}{rgb}{1, 0, 0}

\title{An Efficient Explicit Implementation of a Quantum Algorithm with Quantum Advantage for Nonlinear Scalar Conservation Laws}

\author[1,2]{Kezhen Wang}
\author[3]{Junpeng Hu}
\author[1,2,3,*]{Lei Zhang}
\affil[1]{School of Mathematical Sciences,   Shanghai Jiao Tong University, Shanghai 200240, China.}
\affil[2]{Institute of Natural Sciences, Shanghai Jiao Tong University, Shanghai 200240, China.}
\affil[3]{UnitaryLab Quantum Technology Co., Ltd., Shanghai, P.R. China, 2026}

\begin{document}

\maketitle
\begin{abstract}
Quantum algorithms for nonlinear partial differential equations remain challenging because nonlinear dynamics are not directly amenable to unitary quantum simulation. Building on the level-set formulation of \cite{Jin2022QuantumAF}, we construct a quantum algorithm and provide an explicit gate-level implementation for solving scalar conservation laws. The nonlinear equation is first lifted to a linear Liouville equation, discretized by finite differences, and then embedded into a unitary evolution through Schr\"odingerisation. We further develop quantum procedures for estimating relevant observables from the evolved state. Error bounds and gate-complexity estimates are established for the complete algorithm. The resulting complexity comparison demonstrates quantum advantage for observable estimation in sufficiently high spatial dimensions, under standard assumptions on state preparation and oracle access. Finally, numerical experiments validate the accuracy, multidimensional applicability, and predicted scaling of the proposed method.
\end{abstract}


\section{Introduction}
\label{sec:introduction}

\par Hyperbolic conservation laws provide a fundamental mathematical framework for transport and wave propagation in physical systems. Scalar conservation laws are the simplest nonlinear representatives of this class and arise in compressible flow and shock dynamics through the Burgers equation \cite{courant1999shock, leveque2002hyberbolic}, traffic-flow modeling through the Lighthill--Whitham--Richards model \cite{Lighthill1955traffic}, and two-phase flow in porous media through the Buckley--Leverett equation \cite{Buckley1942sand}. They also serve as basic models for more complicated systems in fluid dynamics, plasma physics, and other transport-dominated applications.

\par We consider the Cauchy problem
\begin{equation}
\begin{cases}
\dfrac{\partial u}{\partial t} + \nabla \cdot \mathbf{F}(u) = 0, & \mathbf{x} \in \mathbb{R}^d, \; t > 0, \\
u(\mathbf{x},0) = u_0(\mathbf{x}),
\end{cases}
\end{equation}
where $u(\mathbf{x}, t):\mathbb{R}^d\times \mathbb{R}^+\rightarrow \mathbb{R}$ represents the conserved scalar quantity and $\mathbf{F}(u)=\left(F_1(u),F_2(u),\dots,F_d(u)\right)$ denotes the flux function.

\par Even for smooth initial data, nonlinear characteristics may intersect in finite time and produce shocks \cite{Lax1987HyperbolicSO}. The level-set method provides a natural framework for representing these branches in an augmented phase space. It lifts the nonlinear conservation law to a linear Liouville equation \cite{Levelset}, which forms the starting point of the quantum algorithm developed in this work. On the computational side, standard finite-volume, discontinuous Galerkin, and WENO methods are highly effective in low spatial dimensions \cite{shu1988weno, cockburn1998rungekutta}, but a direct grid discretization requires a number of degrees of freedom that grows exponentially with the dimension. This scaling motivates alternative approaches for high-dimensional transport problems.

\par Quantum algorithms provide a possible route to manipulating such large discretized systems in an amplitude-encoded form. Unitary dynamics can be implemented through Hamiltonian simulation \cite{Lloyd1996UniversalQS,Berry_2015Taylor,Low_2017_QSP}, whereas the non-unitary evolution produced by many dissipative or upwind discretizations requires an additional embedding. Schr\"{o}dingerisation \cite{SchrodingerJin2022, Schrodinger2} converts a general linear differential system into a higher-dimensional Schr\"{o}dinger-type equation by introducing an auxiliary variable. The resulting Hermitian dynamics can then be simulated using product formulas or more advanced Hamiltonian-simulation techniques. In this work we adopt the first-order Lie--Trotter product formula \cite{trotter1}, with its error controlled through commutator estimates \cite{trottererror}, because it preserves the sparse tensor-product structure of the finite-difference operators and admits an explicit gate-level realization.

\par The principal obstacle is the nonlinearity of the original conservation law. Existing quantum approaches to nonlinear differential equations include Carleman-type linearization \cite{Lloyd2020QuantumAF, Leyton2008AQA} and phase-space formulations related to Koopman--von Neumann dynamics \cite{Joseph2020KoopmanvonNA}. Here we use the level-set lifting of \cite{Levelset, Jin2022QuantumAF}. By introducing an additional scalar variable $p$, the nonlinear conservation law is represented by a linear Liouville equation whose transport coefficients are the sampled flux derivatives. After an upwind finite-difference discretization, this equation becomes a block-diagonal linear ODE in the $p$-index and is therefore directly compatible with Schr\"{o}dingerisation.

\par We then construct the corresponding quantum evolution at the circuit level. The spatial forward and backward difference operators are decomposed into structured shift terms and implemented using multi-controlled rotations and CNOT gates following \cite{Sato2024HamiltonianSF}. The dependence on the auxiliary $p$-register is incorporated through controlled coefficient-loading procedures, including binary-controlled operations and LCU or block-encoding constructions when appropriate.

\par Overall, this work makes three main contributions. First, we provide an end-to-end quantum implementation for solving nonlinear scalar conservation laws, from the level-set linearization and finite-difference discretization to the Schr\"odingerised Hamiltonian evolution and its Trotterized gate-level circuits. Second, we develop quantum procedures for encoding the \(p\)-dependent flux coefficients, recovering the physical solution, and estimating relevant observables from the evolved state. Third, we establish error and gate-complexity estimates for the complete algorithm, which demonstrate a quantum advantage for observable estimation in sufficiently high spatial dimensions, under standard assumptions on state preparation and oracle access.

\par \emph{Organization}. Section~\ref{sec:levlsetlifting} introduces the level-set lifting and its finite-difference discretization. Section~\ref{sec:quantumSCL} presents Schr\"{o}dingerisation, the quantum difference operators, the complete time-evolution circuit, and observable recovery. Section~\ref{sec:error} gives the error and complexity analysis. Section~\ref{sec:numerics} reports the numerical experiments for representative one- and two-dimensional scalar conservation laws, which validate the accuracy and predicted scaling of the proposed method.

\section{Level Set Lifting and Discretization}
\label{sec:levlsetlifting}
\par This section establishes the analytical and numerical foundation for the quantum algorithm developed in Section~\ref{sec:quantumSCL}. We first review the level set method, which lifts the nonlinear scalar conservation law to a linear Liouville equation in a higher-dimensional phase space. This lifting is the key step that renders the problem compatible with quantum simulation techniques. We then introduce the distribution-function formulation, which provides a more tractable representation of the lifted dynamics, and describe its discretization by an upwind finite difference scheme. The resulting semi-discrete system is a linear ODE, which serves as the direct input for the Schr"{o}dingerisation procedure and the quantum circuit construction in the next section.

\subsection{Level set method}
\label{subsec:levelset}
We reformulate the nonlinear conservation law by introducing a level set function $\bm{\phi}$, which represents the solution implicitly. This yields a linear Liouville equation for $\bm{\phi}$, allowing the nonlinear PDE to be embedded into a linear framework that is suitable for quantum simulation.

Consider the scalar conservation law
\begin{equation}\label{conservation law}
\begin{cases}
\partial_t u + \nabla \cdot \mathbf{F}(u) = 0, & (x,t) \in \mathbb{R}^d \times (0,\infty), \\
u(x,0) = u_0(x), & x \in \mathbb{R}^d.
\end{cases}
\end{equation}

We introduce the level set function $\bm{\phi}(t,x,p):\mathbb{R}^+\times \mathbb{R}^d\times \mathbb{R} \to \mathbb{R}$ in $(d+2)$ dimensions, implicitly defined by $\bm{\phi}(t,x,p) = 0 \quad \text{with } p = u(t,x)$. Then $\bm{\phi}$ satisfies the Liouville equation \cite{Levelset}
\begin{equation}\label{CL}
\partial_t \bm{\phi} + (\nabla_p \mathbf{F}(p)) \cdot \nabla_x\bm{\phi} = 0,
\end{equation}
subject to the initial condition $\bm{\phi}(0,x,p) = p - u_0(x)$.

\par In general, solutions of the conservation law may become multi-valued after shock formation. The corresponding branches can be represented as $u_\nu(t,x) = p_\nu(t,x), \quad \nu = 1,\dots,\mathcal{J}$, where $\{p_\nu(t,x)\}_{\nu=1}^{\mathcal{J}}$ are the roots of $\bm{\phi}(t,x,p)=0$. Through this lifting, the original $(d+1)$-dimensional nonlinear PDE is transformed into a $(d+2)$-dimensional linear Liouville equation.

\par\textbf{Distribution function formulation.}
An equivalent and computationally convenient formulation is obtained by introducing the scalar distribution function $\psi$ defined by
$ \psi(t,x,p) = \delta(\bm{\phi}(t,x,p))$, where $\delta(\cdot)$ denotes the Dirac delta function. Then $\psi$ satisfies the linear transport equation
\begin{equation}\label{psi}
\partial_t \psi + \nabla_p \mathbf{F}(p) \cdot \nabla_x \psi = 0,
\end{equation}
with initial condition 
\begin{equation}\label{psi0}
\psi(0,x,p) = \delta(p - u_0(x)).
\end{equation}

When the solution becomes multi-valued, by \cite[Lemma 9]{Jin2022QuantumAF}, $\psi$ admits the representation 
\begin{equation}\label{eq:multivalued}
    \psi(t, x, p) = \sum_{\nu=1}^{\mathcal{J}} \frac{1}{|\partial_p \bm{\phi}(t, x, p_\nu)|} \, \delta(p - p_{\nu}(t, x)),
\end{equation}
where $\{p_\nu(t, x)\}_{\nu=1}^{\mathcal{J}}$ are the $\mathcal{J}$ roots of $\bm{\phi}(t, x, p) = 0$.

\subsection{Discretization}\label{subsec:Discretization}

\par We introduce a uniform discretization of the $(d+1)$-dimensional phase space $(\mathbf{x},p)\in \Omega_x\times \Omega_p \subset \mathbb{R}^{d+1}$:
\begin{itemize}
    \item $\bm{x}$-space: $N_x$ grid points per dimension, spacing $h = 1/N_x$, grid points $\bm{x}_{\bm{j}} = h \bm{j}$ with $\bm{j} = (j_1, \dots, j_d)$, $j_\alpha = 0, \dots, N_x-1$.
    \item $p$-space: $N_p$ grid points, spacing $h_p = (p_{\max} - p_{\min})/N_p$, grid points $p_l = p_{\min} + l h_p$ with $l = 0, \dots, N_p-1$.
\end{itemize}
Without loss of generality, we take $\Omega_x=[0,1]^d$. The continuous distribution function is approximated by its values on the phase-space grid:
\begin{equation}\label{eq:grid_approx}
    \psi(t, \bm{x}_{\bm{j}}, p_l) \approx \psi_{\bm{j}, l}(t).
\end{equation}

\par For the Liouville equation \eqref{psi}, we discretize the spatial derivatives using a first-order upwind finite difference scheme. The derivative $\partial_{x_\alpha}\psi$ is approximated by
\begin{equation}\label{eq:spatial_derivative}
    \partial_{x_\alpha} \psi \approx
    \begin{cases}
        \dfrac{\psi_{\bm{j}+\bm{e}_\alpha,l} - \psi_{\bm{j},l}}{h}, & F'_\alpha(p_{l}) < 0, \\[10pt]
        \dfrac{\psi_{\bm{j},l} - \psi_{\bm{j}-\bm{e}_\alpha,l}}{h}, & F'_\alpha(p_{l}) > 0,
    \end{cases}
\end{equation}
where $\bm{e}_\alpha$ denotes the unit vector in the $\alpha$-th coordinate direction and $F'_\alpha(p) = \partial_{p} F_{\alpha}(p)$, with $\alpha = 1, \cdots, d$. Decomposing the flux derivative into its positive and negative parts,
\begin{equation}\label{eq:flux_split}
    F'_\alpha(p) = [F'_\alpha(p)]_+ + [F'_\alpha(p)]_-, \qquad
    [a]_+ = \max(a, 0), \quad [a]_- = \min(a, 0),
\end{equation}
we obtain the following semi-discrete linear ODE system:
\begin{equation}\label{eq:semi_discrete}
    \frac{d}{dt} \psi_{\bm{j},l}(t) = -\sum_{\alpha=1}^{d} \left( [F'_\alpha(p_{l})]_- (D_{+}^{(\alpha)} \bm{\psi})_{\bm{j},l} + [F'_\alpha(p_{l})]_+ (D_{-}^{(\alpha)} \bm{\psi})_{\bm{j},l} \right),
\end{equation}
where $D_{+}^{(\alpha)}$ and $D_{-}^{(\alpha)}$ denote the forward and backward difference operators along the $\alpha$-th spatial dimension.

\par Collecting all $\psi_{\bm{j},l}$ into a vector $\bm{\Psi}(t)\in \mathbb{R}^{N_p N_x^{d}}$, the semi-discrete system \eqref{eq:semi_discrete} can be written compactly as
\begin{equation}\label{eq:PsiOde}
    \frac{d\bm{\Psi}}{dt}=\bm{A}\bm{\Psi}(t),\quad \bm{\Psi}(0)=\bm{\Psi}_0,
\end{equation}
where $\bm{\Psi}_0$ represents the discretized initial condition and $\bm{A}\in \mathbb{R}^{N_pN_x^{d}\times N_pN_x^{d}}$ is block-diagonal with respect to the $p$-index:
\begin{equation}\label{eq:A_structure}
    \bm{A} = \bigoplus_{l=0}^{N_p-1} \bm{A}^{(l)}, \qquad
    \bm{A}^{(l)} = \sum_{\alpha=1}^{d} \big( [F'_\alpha(p_l)]_- D_+^{(\alpha)} + [F'_\alpha(p_l)]_+ D_-^{(\alpha)} \big).
\end{equation}
Each block $\bm{A}^{(l)}$ acts only on the spatial degrees of freedom associated with a fixed level-set grid point $p_l$. Consequently, different $p$-modes evolve independently, with the corresponding spatial transport operator determined by the local flux derivatives $F'_\alpha(p_l)$.

\par\textbf{Initial condition: $\delta$-function regularization.}
Since the initial condition \eqref{psi0} contains a Dirac delta distribution, it must be replaced by a smooth approximation for numerical implementation. Let $\delta_\omega : \mathbb{R} \to \mathbb{R}$ be a compactly supported mollifier satisfying
\begin{equation}\label{eq:delta_properties}
    \delta_\omega(x) = 0 \;\; \text{for } |x| > \omega, \qquad \int_{-\omega}^{\omega} \delta_\omega(x) \, dx = 1,
\end{equation}
where $\omega > 0$ denotes the smoothing parameter. Common choices include the tent function and the raised cosine:
\begin{equation}\label{eq:delta_examples}
    \delta_\omega(x) = \frac{1}{\omega}\left(1 - \frac{|x|}{\omega}\right)\mathbf{1}_{[-\omega,\omega]}(x), \qquad
    \delta_\omega(x) = \frac{1}{2\omega}\left(1 + \cos\left(\frac{\pi x}{\omega}\right)\right)\mathbf{1}_{[-\omega,\omega]}(x).
\end{equation}
The regularized initial condition is then given by
\begin{equation}\label{eq:regularized_ic}
    \psi_{\bm{j},l}(0) = \delta_\omega\big(p_{l} - u_0(x_{\bm{j}})\big).
\end{equation}
To ensure accurate quadrature in the $p$-variable, the support of $\delta_\omega$ should cover at least $m \geq 2$ grid cells in the $p$-direction:
\begin{equation}\label{eq:omega_h}
    \omega = m h_p, \qquad m \geq 2,
\end{equation}
which gives $N_p = \mathcal{O}((p_{\max} - p_{\min}) / \omega)$.

\section{Quantum Algorithm of Scalar Conservation Laws}
\label{sec:quantumSCL}

\par In this section, we present a quantum algorithm for scalar conservation laws based on the level set--Liouville formulation developed in Section~\ref{sec:levlsetlifting}. Starting from the linear ODE system \eqref{eq:PsiOde}, we apply Schr\"{o}dingerisation to embed the non-unitary dynamics into a higher-dimensional Hamiltonian system, and then construct explicit gate-level quantum circuits for the resulting time evolution.


\subsection{Quantum representation of finite difference operators}
\label{subsec:quantumdiff}

\par We first describe how the finite difference operators arising from the discretization in Section~\ref{subsec:Discretization} can be encoded into quantum circuits, following \cite{Sato2024HamiltonianSF}. Consider a one-dimensional domain $\Omega := (0,L)$ discretized into $N=2^n$ uniformly spaced grid points with spacing $h = L/(N+1)$, and let $\mathbf{u} = (u_0,\dots,u_{N-1})^\top$ denote the discrete solution vector. The standard finite difference operators are
\begin{equation*}
\begin{aligned}
    (\mathbf{D}^{+} \mathbf{u})_j &= \frac{u_{j+1}-u_j}{h}, \quad
    (\mathbf{D}^{-} \mathbf{u})_j = \frac{u_j-u_{j-1}}{h},\\
    (\mathbf{D}^{\pm} \mathbf{u})_j &= \frac{u_{j+1}-u_{j-1}}{2h}, \quad
    (\mathbf{D}^{\Delta} \mathbf{u})_j = \frac{u_{j+1}-2u_j+u_{j-1}}{h^2},
\end{aligned}
\end{equation*}
where the boundary values are specified by the imposed boundary conditions; for example, under periodic boundary conditions, $u_N = u_0$ and $u_{-1} = u_{N-1}$.

\par These operators admit tensor-product representations in terms of the elementary matrices $\sigma_{01} = |0\rangle\langle1|$, $\sigma_{10} = |1\rangle\langle0|$, and $I$. Defining the qubit shift operators
\begin{equation*}
    S^- = \sum_{j=1}^{n} s_j^-, \quad
    S^+ = \sum_{j=1}^{n} s_j^+,
\end{equation*}
with components
\begin{equation*}
    s_j^- = I^{\otimes(n-j)} \otimes \sigma_{01} \otimes \sigma_{10}^{\otimes(j-1)}, \quad
    s_j^+ = I^{\otimes(n-j)} \otimes \sigma_{10} \otimes \sigma_{01}^{\otimes(j-1)},
\end{equation*}
the finite difference operators can be expressed as combinations of $S^-$ and $S^+$. For example, the forward difference operator with periodic boundary conditions is given by
\begin{equation*}
    D_P^+ = \frac{1}{h}\left(S^- - I^{\otimes n} + \sigma_{10}^{\otimes n}\right).
\end{equation*}

\par Following \cite{Sato2024HamiltonianSF}, these shift operators can be implemented through the Hamiltonian
\begin{equation*}
    \mathcal{H} = \gamma \sum_{j=1}^n \left(e^{i\lambda}s_j^- + e^{-i\lambda}s_j^+\right), \qquad \gamma,\lambda \in \mathbb{R},
\end{equation*}
whose time evolution $\exp(-i\mathcal{H}\tau)$ is approximated by the unitary circuit $V(\gamma\tau,\lambda)$ constructed from the multi-qubit gates
\begin{equation*}
    U_j(\lambda) = \left(\prod_{m=1}^{j-1} \mathrm{CNOT}_m^j \right) P_j(\lambda) H_j.
\end{equation*}
Here, $P_j(\lambda)$ is the phase gate acting on the $j$-th qubit, and $\mathrm{CNOT}_m^j$ denotes the $\mathrm{CNOT}$ gate acting on the $m$-th qubit controlled by the $j$-th qubit. The circuit $V(\gamma\tau, \lambda)$ is given by
\begin{equation}
    V(\gamma\tau, \lambda)=U_j(\lambda)CRZ^{1,\dots j-1}_{j}(-2\gamma\tau)U_j(\lambda )^\dagger
\end{equation}
where $CRZ^{1,\dots j-1}_{j}$ denotes the $RZ$ gate acting on the $j$-th qubit controlled by the $1,\dots,j-1$-th qubits. The following lemma bounds the approximation error.

\begin{lem}[\cite{Sato2024HamiltonianSF}]
Let $\mathcal{H}=\gamma\sum_{j=1}^n\left(e^{i\lambda}s_j^{-}+e^{-i\lambda}s_j^{+}\right)$. The time evolution operator $\exp(-i\mathcal{H}\tau)$ satisfies
\begin{equation*}
\left\| \exp(-i\mathcal{H}\tau) - V(\gamma\tau, \lambda) \right\|
\leq \frac{\gamma^2 \tau^2 (n-1)}{2}.
\end{equation*}
\end{lem}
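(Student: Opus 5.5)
The plan is to write $\mathcal{H}=\sum_{j=1}^n \mathcal{H}_j$ with $\mathcal{H}_j=\gamma\left(e^{i\lambda}s_j^-+e^{-i\lambda}s_j^+\right)$. I read the circuit $V(\gamma\tau,\lambda)$ as the ordered product $\prod_{j=1}^n V_j$, where $V_j = U_j(\lambda)\,CRZ^{1,\dots,j-1}_j(-2\gamma\tau)\,U_j(\lambda)^\dagger$. The proof then has two parts: each factor is \emph{exact}, and the only error comes from the first-order Lie--Trotter splitting.

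\textbf{Step 1: each factor is exact.} I will show $V_j=\exp(-i\mathcal{H}_j\tau)$. The operator $s_j^-+s_j^+$ acts nontrivially only on qubits $1,\dots,j$. It swaps the two basis states $|0\,1\cdots1\rangle$ and $|1\,0\cdots0\rangle$, where the first listed bit is qubit $j$ and the remaining bits are qubits $j-1,\dots,1$, and it annihilates every other basis state. Hence $\mathcal{H}_j$ is $\gamma$ times a Pauli-type operator $\cos\lambda\,X-\sin\lambda\,Y$ (up to convention) on a two-dimensional subspace, and zero elsewhere.

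The CNOT ladder controlled by qubit $j$ maps this pair of states to $|0\rangle_j\otimes|1\cdots1\rangle$ and $|1\rangle_j\otimes|1\cdots1\rangle$. On that pair, $H_j$ together with $P_j(\lambda)$ diagonalizes the $2\times2$ block. I will verify by direct computation that
\[
U_j^\dagger\,\mathcal{H}_j\,U_j=\gamma\, Z_j\otimes|1\cdots1\rangle\langle1\cdots1|,
\]
with the sign and ordering conventions fixed so that this identity holds. Exponentiating then gives exactly the multi-controlled $RZ(-2\gamma\tau)$. This bookkeeping of conventions is the main obstacle, though it is routine.

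\textbf{Step 2: Trotter error.} I will telescope
\[
e^{-i\tau\sum_j\mathcal{H}_j}-\prod_j e^{-i\tau\mathcal{H}_j}.
\]
Peeling off one term at a time gives
\[
\Big\|e^{-i\tau(\mathcal{H}_1+R_1)}-e^{-i\tau\mathcal{H}_1}e^{-i\tau R_1}\Big\|+\Big\|e^{-i\tau R_1}-\prod_{j\ge2}e^{-i\tau\mathcal{H}_j}\Big\|,
\qquad R_k=\sum_{j>k}\mathcal{H}_j .
\]
The standard two-term estimate $\|e^{-i\tau(A+B)}-e^{-i\tau A}e^{-i\tau B}\|\le\frac{\tau^2}{2}\|[A,B]\|$ from \cite{trottererror} bounds the $k$-th term by $\frac{\tau^2}{2}\|[\mathcal{H}_k,R_k]\|$. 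Summing over $k$ gives at most $\frac{\tau^2}{2}\sum_{k=1}^{n-1}\|[\mathcal{H}_k,R_k]\|$.

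\textbf{Step 3: commutator bound.} It remains to show $\|[\mathcal{H}_k,R_k]\|\le\gamma^2$ for each $k$; summing then gives the claimed $\gamma^2\tau^2(n-1)/2$.

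The crude estimate $2\|\mathcal{H}_k\|\|R_k\|$ is too weak, so I will exploit the structure instead. Each $\mathcal{H}_j$ is supported on its own pair of basis states $\{a_j,b_j\}$, with $a_j=|\ldots0\,1^{j-1}\rangle$ and $b_j=|\ldots1\,0^{j-1}\rangle$, tensored with identity on the higher qubits. The operator $R_k$ is then a direct sum of $2\times2$ blocks. In the computational basis, $\mathcal{H}_k$ and $R_k$ are adjacency-type matrices of the cyclic increment graph restricted to carries of length exactly $k$ and longer than $k$, respectively. Each basis vector has at most one $R_k$-neighbor and at most one $\mathcal{H}_k$-neighbor, with entries of modulus $\gamma$. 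Consequently $\mathcal{H}_kR_k$ and $R_k\mathcal{H}_k$ are each partial isometries times $\gamma^2$ with disjoint supports, so $\|[\mathcal{H}_k,R_k]\|\le\gamma^2$. I will check the disjointness carefully. If it fails, I fall back on the bound $2\gamma^2$ and track the constant, but I expect the disjointness to hold, giving the stated bound.
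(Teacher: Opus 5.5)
Your Steps 1 and 2 take the standard route (the paper itself gives no proof and imports the lemma from Sato et al.): each factor is exact, and all error comes from the first-order splitting. The gap is in Step 3. There the claim $\|[\mathcal{H}_k,R_k]\|\le\gamma^2$ is false for $k=1$ once $n\ge3$.

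Since $\mathcal{H}_1$ couples the pairs $\{2m,2m+1\}$ and $R_1$ couples the pairs $\{2q-1,2q\}$, the two products do not have disjoint supports. Up to phases and the factor $\gamma^2$, $\mathcal{H}_1R_1$ sends $|2q\rangle\mapsto|2q-2\rangle$, while $R_1\mathcal{H}_1$ sends $|2q\rangle\mapsto|2q+2\rangle$. Restricted to even (or odd) indices, $[\mathcal{H}_1,R_1]$ is therefore a tridiagonal path operator on $2^{n-1}$ sites with off-diagonal entries of modulus $\gamma^2$. Its norm is $2\gamma^2\cos\bigl(\pi/(2^{n-1}+1)\bigr)$; for $n=3$ this is $2\cos(\pi/5)\gamma^2\approx1.618\gamma^2$. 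Your fallback of $2\gamma^2$ for each of the $n-1$ terms yields $(n-1)\gamma^2\tau^2$, twice the stated bound. So the proposal as written does not prove the lemma.

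The missing observation is that $\mathcal{H}_2,\dots,\mathcal{H}_n$ act on pairwise disjoint pairs $\{2q-1,2q\}$: every edge of carry length at least two is flanked by carry-length-one edges. Hence $\mathcal{H}_kR_k=R_k\mathcal{H}_k=0$, so $[\mathcal{H}_k,R_k]=0$ for all $k\ge2$, and the whole error is the $k=1$ term. You can finish in either of two ways.

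(i) Bound $\|[\mathcal{H}_1,R_1]\|\le2\|\mathcal{H}_1\|\|R_1\|=2\gamma^2$, since both are direct sums of $2\times2$ blocks of norm $\gamma$. This gives the $n$-independent error bound $\gamma^2\tau^2\le(n-1)\gamma^2\tau^2/2$ for $n\ge3$. For $n=1$ the circuit is exact, and for $n=2$ use (ii).

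(ii) Write $[\mathcal{H}_1,R_1]=\sum_{k=2}^n[\mathcal{H}_1,\mathcal{H}_k]$ and apply your disjointness argument to each \emph{pairwise} commutator, where it does work. Each carry-$k$ edge $(b,c)$ sits in an isolated path $a$--$b$--$c$--$d$ whose outer edges belong to $\mathcal{H}_1$. Then $\mathcal{H}_k\mathcal{H}_1$ maps $a\mapsto c$ and $d\mapsto b$, while $\mathcal{H}_1\mathcal{H}_k$ maps $b\mapsto d$ and $c\mapsto a$, so $\|[\mathcal{H}_1,\mathcal{H}_k]\|=\gamma^2$. Summing these $n-1$ contributions reproduces exactly the stated constant $\gamma^2\tau^2(n-1)/2$.
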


\subsection{Quantum circuit implementation}
\label{subsec:circuits}

\par We now present the full quantum circuit implementation, organized into three components: quantum state encoding, Schr\"{o}dingerisation, and explicit circuit construction for the time-evolution operator.

\subsubsection{Quantum encoding}
\par Following the discretization in Section~\ref{subsec:Discretization}, the dynamics are governed by the linear ODE system \eqref{eq:PsiOde}. We introduce the following shorthand notation for the flux-derivative coefficients at each $p$-grid point:
\begin{equation}\label{eq:coeffs}
    P_{\alpha,l}^+ = [F'_\alpha(p_l)]_+ \geq 0, \qquad P_{\alpha,l}^- = [F_\alpha'(p_l)]_- \leq 0, \qquad P_{\alpha,l} = P_{\alpha,l}^+ + P_{\alpha,l}^-,
\end{equation}
so that the evolution matrix has the block-diagonal form
\begin{equation}\label{eq:A_structure2}
    \bm{A} = \bigoplus_{l=0}^{N_p-1} \bm{A}^{(l)}, \qquad
    \bm{A}^{(l)} = \sum_{\alpha=1}^{d} \big( P_{\alpha,l}^- D_+^{(\alpha)} + P_{\alpha,l}^+ D_-^{(\alpha)} \big).
\end{equation}
We encode the $p$-index using $n_p = \lceil \log_2 N_p \rceil$ qubits and each spatial dimension using $n_x = \lceil \log_2 N_x \rceil$ qubits, requiring a total of $n_p + d n_x$ qubits. The quantum state encoding the solution is
\begin{equation}\label{eq:quantum_encoding}
    |\bm{\Psi}(t)\rangle = \frac{1}{\|\bm{\Psi}(t)\|} \sum_{l=0}^{N_p-1} \sum_{\bm{j}} \psi_{\bm{j}, l}(t) \, |l\rangle_p \otimes |\bm{j}\rangle_{\bm{x}}.
\end{equation}
By the block-diagonal structure of $\bm{A}$, the matrix acts on this encoded state as $\bm{A}(|l\rangle_p \otimes |\bm{j}\rangle_{\bm{x}}) = |l\rangle_p \otimes \bm{A}^{(l)}|\bm{j}\rangle_{\bm{x}}$. Hence, each $p$-block evolves independently under its corresponding spatial operator.

\subsubsection{Schr\"{o}dingerisation}\label{subsubsec: schroding use}
Since $\bm{A}$ is generally non-Hermitian, we decompose it into its Hermitian and skew-Hermitian components:
\begin{align}\label{eq:A=A1+A2}
    \bm{A} &= \bm{A}_1 + i\bm{A}_2, \nonumber\\
    \bm{A}_{1} &= \frac{\bm{A} + \bm{A}^\dagger}{2} = \frac{1}{2} \bigoplus_{\alpha=1}^d\bigoplus_{l=0}^{N_p-1} (P_{\alpha,l}^- - P_{\alpha,l}^+)(D_+ - D_-),  \\
    \bm{A}_{2} &= \frac{\bm{A} - \bm{A}^\dagger}{2i} = \frac{1}{2i} \bigoplus_{\alpha=1}^d\bigoplus_{l=0}^{N_p-1} (P_{\alpha,l}^- + P_{\alpha,l}^+)(D_+ + D_-), \nonumber
\end{align}
where both $\bm{A}_1$ and $\bm{A}_2$ are Hermitian, so that $i\bm{A}_2$ is skew-Hermitian. Note that $P_{\alpha,l}^- - P_{\alpha,l}^+ = -|P_{\alpha,l}|$.

\par We apply Schr\"{o}dingerisation \cite{SchrodingerJin2022} to embed the non-unitary evolution into a unitary dynamical system. Introducing the warped-phase transformation
\begin{equation}
    \mathbf{v}(t,w) = e^{-w}\bm{\Psi}(t), \qquad w > 0,
\end{equation}
which satisfies $\mathbf{v}(t,0) = \bm{\Psi}(t)$ and $\mathbf{v}(t,w) \to 0$ as $w \to \infty$, a direct calculation gives
\begin{equation*}
    \frac{d\mathbf{v}}{dt} = -\mathbf{A}_1 \partial_w \mathbf{v} + i\mathbf{A}_2 \mathbf{v}.
\end{equation*}
Applying the Fourier transform in $w$ yields the Schr\"{o}dinger-type equation
\begin{equation}
\partial_t \hat{\mathbf{v}}(t,\eta)
= i(\eta \mathbf{A}_1 + \mathbf{A}_2)\hat{\mathbf{v}}(t,\eta)
\triangleq i\mathbf{H}_C \hat{\mathbf{v}}(t,\eta),
\end{equation}
where $\mathbf{H}_C = \eta\mathbf{A}_1 + \mathbf{A}_2$ is a Hermitian operator that serves as the effective Hamiltonian.

\par To discretize the auxiliary variable $w$, we truncate the domain to $w \in [-\pi R, \pi R]$, introduce $N_w$ uniformly spaced grid points, and apply the discrete Fourier transform. The resulting Hamiltonian has the tensor-product structure
\begin{equation}\label{eq:H_tensor}
    \mathbf{H}_C = \mathbf{A}_1 \otimes D_\eta + \mathbf{A}_2 \otimes \mathbf{I}_{N_w},
\end{equation}
where $D_\eta = \mathrm{diag}(\eta_0, \dots, \eta_{N_w-1})$ with $\eta_r = (r-N_w/2)/R$. The system evolves as $d|\hat{\mathbf{v}}(t)\rangle/dt = -i\mathbf{H}_C|\hat{\mathbf{v}}(t)\rangle$, with initial state
\begin{equation}
    \hat{\mathbf{v}}(0) = \mathcal{F}_w[\bm{\Psi}(0)\otimes e^{-|\cdot|}] = \bm{\Psi}(0)\otimes \bm{\eta},
\end{equation}
where $e^{-|\cdot|} = (e^{-|w_0|},\dots,e^{-|w_{N_w-1}|})^\top$ and $\bm{\eta} = \bigl(\frac{2}{\eta_0^2+1},\dots,\frac{2}{\eta_{N_w-1}^2+1}\bigr)^\top$.

\par Substituting the block structure of $\bm{A}_1$ and $\bm{A}_2$, the Hamiltonian can be written explicitly as
\begin{align}\label{HHJ}
\mathbf{H}_C
&= \sum_{l,r} (r-\tfrac{N_w}{2}) |l\rangle\langle l| \otimes \mathbf{H}^{(l)}_1 \otimes |r\rangle\langle r|
+ \sum_l |l\rangle\langle l| \otimes \mathbf{H}^{(l)}_2 \otimes I_{N_w},
\end{align}
where $\mathbf{H}^{(l)}_1 = -\frac{1}{2}\sum_{\alpha=1}^d |P_{\alpha,l}|(\mathbf{H}_1)_\alpha$, $\mathbf{H}^{(l)}_2 = \frac{1}{2i}\sum_{\alpha=1}^d P_{\alpha,l}(\mathbf{H}_2)_\alpha$, $\mathbf{H}_1$ and $\mathbf{H}_2$ are the discrete antisymmetric and symmetric difference operators respectively, and $(H)_\alpha = I^{\otimes(d-\alpha)n_x}\otimes H\otimes I^{\otimes(\alpha-1)n_x}$ denotes the embedding of $H$ into the $\alpha$-th spatial dimension.

\par This tensor-product Hamiltonian structure enables efficient quantum circuit construction via controlled operations, as detailed below.

\subsubsection{Quantum circuits}\label{section quantum circuits}

\par We now construct explicit quantum circuits for the time-evolution operator $U_C(\tau) := \exp(i\mathbf{H}_C\tau)$. Applying the first-order Lie--Trotter--Suzuki decomposition to \eqref{HHJ} gives
\begin{equation}\label{V*}
    \begin{aligned}
        U_C(\tau)
        &\approx \exp\!\Bigl(i\tau \sum_{l}|l\rangle\langle l|\otimes\mathbf{H}^{(l)}_2\otimes I^{\otimes n_w}\Bigr)\exp\!\Bigl(i\tau \sum_{l,r}(r-\tfrac{N_w}{2})|l\rangle\langle l|\otimes \mathbf{H}^{(l)}_1\otimes |r\rangle\langle r|\Bigr)\\
        &=\sum_{l}|l\rangle\langle l|\otimes \mathbf{U}^{(l)}_2(\tau)\otimes I^{\otimes n_w} \cdot \sum_{l,r}|l\rangle\langle l|\otimes \bigl(\mathbf{U}^{(l)}_1(-\tau)\bigr)^{r-N_w/2}\otimes |r\rangle\langle r|
        \;\triangleq\; V_C(\tau),
    \end{aligned}
\end{equation}
where $\mathbf{U}^{(l)}_1(-\tau) = \exp(-i\mathbf{H}^{(l)}_1\tau) = \prod_{\alpha=1}^d \bigl(\exp(-i|P_{\alpha,l}|\mathbf{H}_1\tau)\bigr)_\alpha$ and $\mathbf{U}^{(l)}_2(\tau) = \exp(i\mathbf{H}^{(l)}_2\tau) = \prod_{\alpha=1}^d \bigl(\exp(iP_{\alpha,l}\mathbf{H}_2\tau)\bigr)_\alpha$.

\par We construct each factor explicitly for \textbf{periodic boundary conditions} so that $D^+_P = S^- - I^{\otimes n_x} + \sigma_{10}^{\otimes n_x}$. Other boundary conditions can be incorporated by replacing the corresponding finite difference operators $D^\pm$. The key identity from \cite{Sato2024HamiltonianSF} is
\[
e^{i\lambda}s_j^{-}+e^{-i\lambda}s_j^{+} = I^{\otimes (n_x-j)}\otimes U_j(-\lambda)\bigl(Z\otimes |1\rangle\langle 1|^{\otimes(j-1)}\bigr)U_j(-\lambda)^\dagger, \quad U_j(\lambda)=\Bigl(\prod_{m=1}^{j-1}\mathrm{CNOT}_m^j\Bigr)P_j(\lambda)H_j.
\]

\paragraph{Circuit for $\mathbf{U}_1(-\tau)$.}
With $\gamma_1 = 1/(2hR)$, the Trotter approximation gives
\begin{equation*}
    \exp(-i\mathbf{H}_1\tau) \approx \exp(2i\gamma_1\tau I^{\otimes n_x})\cdot \exp\!\bigl(-i\gamma_1\tau(\sigma_{01}^{\otimes n_x}+\sigma_{10}^{\otimes n_x})\bigr)\cdot \prod_{j=1}^{n_x} \exp\!\bigl(-i\gamma_1\tau(s_j^{-}+s_j^{+})\bigr)
    = Ph(2\gamma_1\tau)\,U_{11}(\tau)\,U_{12}(\tau),
\end{equation*}
where $Ph(\theta) = \exp(i\theta)I^{\otimes n_x}$ denotes a global phase. Using the identity
\[
\sigma_{01}^{\otimes n_x}+\sigma_{10}^{\otimes n_x} = X_{n_x}\bigl(|0\rangle|1\rangle^{\otimes(n_x-1)}\langle 1|\langle 0|^{\otimes(n_x-1)}+|1\rangle|0\rangle^{\otimes(n_x-1)}\langle 0|\langle 1|^{\otimes(n_x-1)}\bigr)X_{n_x},
\]
the corresponding subcircuits are
\begin{equation*}
    \begin{aligned}
        U_{11}(\tau)&=U_{n_x}(0)\,X_{n_x}\,CRZ_{n_x}^{1,\dots,n_x-1}(2\gamma_1\tau)\,X_{n_x}\,U_{n_x}(0)^\dagger,\\
        U_{12}(\tau)&=\prod_{j=1}^{n_x} I^{\otimes (n_x-j)}\otimes W_j(\gamma_1,\tau),
    \end{aligned}
\end{equation*}
where $W_j(\gamma_1,\tau)=U_j(0)\,CRZ_j^{1,\dots,j-1}(-2\gamma_1\tau)\,U_j(0)^\dagger$. This yields
\begin{equation}\label{V1}
    \mathbf{U}_1(-\tau)\approx Ph(2\gamma_1\tau)\,U_{11}(\tau)\,U_{12}(\tau) \triangleq V_1(-\tau).
\end{equation}
The explicit gate-level circuit for $V_1(-\tau)$ is provided in Appendix~\ref{appV1circuits}.

\paragraph{Circuit for $\mathbf{U}_2(\tau)$.}
With $\gamma_2 = 1/(2h)$, the Trotter approximation gives
\begin{equation*}
    \exp(i\mathbf{H}_2\tau)
    \approx \underbrace{\exp\!\bigl(\gamma_2\tau(\sigma_{01}^{\otimes n_x}-\sigma_{10}^{\otimes n_x})\bigr)}_{U_{21}(\tau)}
    \cdot \underbrace{\prod_{j=1}^{n_x}\exp\!\bigl(\gamma_2\tau(s_j^{-}-s_j^{+})\bigr)}_{U_{22}(\tau)}.
\end{equation*}
Using the identities $s_j^{-}-s_j^{+} = -i\bigl(I^{\otimes(n_x-j)}\otimes U_j(-\tfrac{\pi}{2})(Z\otimes|1\rangle\langle1|^{\otimes(j-1)})U_j(-\tfrac{\pi}{2})^\dagger\bigr)$ and
\begin{equation*}
    \sigma_{01}^{\otimes n_x}-\sigma_{10}^{\otimes n_x}
    =-i\, U_{n_x}(-\tfrac{\pi}{2})\,(I\otimes X^{\otimes(n_x-1)})\,(Z\otimes|1\rangle\langle1|^{\otimes(n_x-1)})\,(I\otimes X^{\otimes(n_x-1)})\,U_{n_x}(-\tfrac{\pi}{2})^\dagger,
\end{equation*}
we obtain
\begin{equation*}
    \begin{aligned}
        U_{21}(\tau) &= U_{n_x}(-\tfrac{\pi}{2})\,(I\otimes X^{\otimes(n_x-1)})\,CRZ_{n_x}^{1,\dots,n_x-1}(2\gamma_2\tau)\,(I\otimes X^{\otimes(n_x-1)})\,U_{n_x}(-\tfrac{\pi}{2})^\dagger \;\triangleq\; W'_{n_x}(\gamma_2,\tau),\\
        U_{22}(\tau) &= \prod_{j=1}^{n_x} U_{j}(-\tfrac{\pi}{2})\,(I\otimes X^{\otimes(j-1)})\,CRZ_{j}^{1,\dots,j-1}(2\gamma_2\tau)\,(I\otimes X^{\otimes(j-1)})\,U_{j}(-\tfrac{\pi}{2})^\dagger \;\triangleq\;\prod_{j=1}^{n_x} W'_{j}(\gamma_2,\tau),
    \end{aligned}
\end{equation*}
which gives
\begin{equation}\label{V2}
    \mathbf{U}_2(\tau)\approx U_{21}(\tau)\,U_{22}(\tau) \triangleq V_2(\tau).
\end{equation}
The explicit circuit for $V_2(\tau)$ and its components is provided in Appendix~\ref{appendix2}.

\paragraph{Full circuit for \(V_C(\tau)\).}

Since different spatial directions act on different tensor factors, the factors with different \(\alpha\) commute. Hence, as in \eqref{V*}, for each fixed \(l\),
\[
V_1^{(l)}(-\tau)
:=
\exp(-iH_1^{(l)}\tau)
=
\prod_{\alpha=1}^d
(V_1(|P_{\alpha,l}|\tau))_\alpha,
\]
and
\[
V_2^{(l)}(\tau)
:=
\exp(iH_2^{(l)}\tau)
=
\prod_{\alpha=1}^d
(V_2(P_{\alpha,l}\tau))_\alpha.
\]
Replacing
\(\mathbf U_1^{(l)}\) and \(\mathbf U_2^{(l)}\) by \(V_1^{(l)}\) and
\(V_2^{(l)}\) in \eqref{V*}, we obtain
\begin{equation}\label{VC}
\begin{aligned}
U_C(\tau)
&\approx
\left(
\sum_l
|l\rangle\langle l|
\otimes
V_2^{(l)}(\tau)
\otimes
I^{\otimes n_w}
\right)
\left(
\sum_{l,r}
|l\rangle\langle l|
\otimes
\left(V_1^{(l)}(-\tau)\right)^{r-N_w/2}
\otimes
|r\rangle\langle r|
\right)
\\
&\triangleq
V_C(\tau).
\end{aligned}
\end{equation}
Writing
\[
r=\sum_{m=0}^{n_w-1}r_m2^m,
\qquad
N_w=2^{n_w},
\]
for each fixed \(l\), define
\[
\mathcal V_1^{(l)}[s]
:=
\left(V_1^{(l)}(-\tau)\right)^s
=
\prod_{\alpha=1}^d
(V_1(s|P_{\alpha,l}|\tau))_\alpha,
\qquad
a\in\mathbb Z.
\]
Then the \(w\)-controlled part factorizes as
\begin{equation}\label{Vw_l_definition}
\begin{aligned}
V_w^{(l)}(-\tau)
&:=
\sum_r
\left(V_1^{(l)}(-\tau)\right)^{r-N_w/2}
\otimes
|r\rangle\langle r|
\\
&=
\mathcal V_1^{(l)}[-2^{n_w-1}]
\prod_{m=0}^{n_w-1}
\left[
I^{\otimes dn_x}\otimes |0\rangle\langle0|_{r_m}
+
\mathcal V_1^{(l)}[2^m]
\otimes |1\rangle\langle1|_{r_m}
\right].
\end{aligned}
\end{equation}
Therefore,
\[
V_C(\tau)
=
\left(
\sum_l
|l\rangle\langle l|
\otimes
V_2^{(l)}(\tau)
\otimes
I^{\otimes n_w}
\right)
\left(
\sum_l
|l\rangle\langle l|
\otimes
V_w^{(l)}(-\tau)
\right).
\]
The circuit for the fixed-\(l\) block \(V_w^{(l)}(-\tau)\) is
\[
\Qcircuit @C=1em @R=1.3em {
\lstick{q}
& \qw {/}^{dn_x}
& \gate{\mathcal V_1^{(l)}[2^0]}
& \gate{\mathcal V_1^{(l)}[2^1]}
& \qw
& \cdots
& 
& \gate{\mathcal V_1^{(l)}[2^{n_w-1}]}
& \gate{\mathcal V_1^{(l)}[-2^{n_w-1}]}
& \qw
\\
\lstick{r_0}
& \qw
& \ctrl{-1}
& \qw
& \qw
& \cdots
& 
& \qw
& \qw
& \qw
\\
\lstick{r_1}
& \qw
& \qw
& \ctrl{-2}
& \qw
& \cdots
& 
& \qw
& \qw
& \qw
\\
\lstick{\vdots}
& \qw
& \qw
& \qw
& \qw
& \ddots
& 
& \qw
& \qw
& \qw
\\
\lstick{r_{n_w-1}}
& \qw
& \qw
& \qw
& \qw
& \cdots
& 
& \ctrl{-4}
& \qw
& \qw
}
\]

\begin{remark}[Binary-controlled implementation for the linear case]\label{rem: binary}
When $P_{\alpha,l}$ is affine in the $p$-grid index, the multiplexed operator $\ket{l}\bra{l}\otimes V^{(l)}$ can be implemented using binary-controlled powers, in the same manner as the $w$-controlled construction. Further details are provided in Appendix~\ref{appendix:remark}.
\end{remark}

\begin{remark}[General nonlinear flux derivatives]
\label{rem:nonlinear-flux}
For a general nonlinear flux derivative, the corresponding
block-diagonal evolution can be implemented directly as the
multiplexed unitary $V=\sum_{l=0}^{N_p-1}|l\rangle\langle l|_p\otimes V^{(l)},$ where the value \(P_{\alpha,l}\) is encoded in the evolution parameters of \(V^{(l)}\). The \(p\)-register itself therefore selects the appropriate unitary. A direct implementation applies a multi-controlled version of
\(V^{(l)}\) for each \(l\). In the absence of additional structure in \(F'_\alpha\), this generally introduces a cost linear in \(N_p\), in contrast to the logarithmic-depth binary-power construction available for affine flux derivatives. More structured nonlinear functions may admit more efficient arithmetic-based or polynomial-approximation implementations.
\end{remark}

\par To evolve the system to time $T$, we apply $V_C(\tau)$ a total of $T/\tau$ times. The complete circuit is given by
\begin{equation}\label{endcircuit}
    \Qcircuit @C=1.5em @R=1.5em {
   & \lstick{\ket{\bm{\Psi}(0)}} &\qw {/}^{dn_x+n_p}\qw&\qw&\multigate{1}{V_C^{T/\tau}(\tau)}&\qw&\qw&\\
   & \lstick{\ket{w}}&\qw {/}^{n_w}\qw &\gate{\mathcal{F}} &\ghost{V_C^{T/\tau}(\tau)}&\gate{\mathcal{F}^\dagger}&\qw&
}
\end{equation}
where $|\bm{\Psi}(0)\rangle = \sum_{l,\bm{j}} \psi_{\bm{j},l}(0)|l\rangle|\bm{j}\rangle$ is the encoded initial state \eqref{eq:quantum_encoding}, and $\mathcal{F}$ and $\mathcal{F}^\dagger$ denote the QFT and inverse QFT on the $w$-register, respectively.

\subsection{Solution recovery and computing physical observables}
\label{subsec: solution recovery}
\label{subsec:observables}

\par After the quantum circuit \eqref{endcircuit} prepares the final state
$|\bm{\Psi}(T)\rangle$, the quantum state encoding the evolved
distribution function is
\begin{equation}\label{eq:final_state}
 |\bm{\Psi}(T)\rangle
 =
 \frac{1}{\mathcal{N}_T}
 \sum_{l=0}^{N_p-1}
 \sum_{\bm{j}}
 \psi_{\bm{j},l}(T)
 |l\rangle_p\otimes|\bm{j}\rangle_{\bm{x}},
\end{equation}
where $\mathcal{N}_T^2 =\sum_{\bm{j},l} |\psi_{\bm{j},l}(T)|^2.$ Physical quantities are extracted by computing moments of $\psi$ with
respect to the auxiliary variable $p$.

\par Given a test function
$G:\mathbb{R}\rightarrow\mathbb{R}$, the corresponding physical
observable is defined as
\begin{equation}\label{eq:obs_def}
 \langle G\rangle(t,\bm{x})
 :=
 \int_{\mathbb{R}}
 G(p)\psi(t,\bm{x},p)\,dp,
\end{equation}
which, after discretization and mollification of the delta function,
becomes
\begin{equation}\label{eq:obs_disc}
 \langle G\rangle_{\bm{j}}(T)
 :=
 \frac{1}{N_p}
 \sum_{l=0}^{N_p-1}
 G(p_l)\psi_{\bm{j},l}(T).
\end{equation}
For single-valued solutions,
$\langle G\rangle_{\bm{j}}\approx
G(u(T,\bm{x}_{\bm{j}}))$.
For multi-valued solutions, this observable represents the Jacobian-weighted ensemble
average over all characteristic branches. We next describe four
important classes of observables.
\subsubsection*{Zeroth moment: $G(p)=1$}

The zeroth moment
\begin{equation}
    \langle 1\rangle(t,\bm{x})
 =
 \int_{\mathbb{R}}\psi(t,\bm{x},p)\,dp
\end{equation}
represents the total density in the $p$ variable. Applying $H^{\otimes n_p}$ to the
$p$-register gives
\begin{equation}\label{eq:hadamard_p}
 H^{\otimes n_p}|l\rangle
 =
 \frac{1}{\sqrt{N_p}}
 \sum_{k=0}^{N_p-1}
 (-1)^{l\cdot k}|k\rangle.
\end{equation}
Therefore, the component associated with
$|k\rangle=|0\rangle^{\otimes n_p}$ is
\begin{equation}
 \frac{1}{\mathcal{N}_T\sqrt{N_p}}
 \sum_{\bm{j}}
 \left(
   \sum_{l=0}^{N_p-1}
   \psi_{\bm{j},l}(T)
 \right)
 |\bm{j}\rangle_{\bm{x}}
 |0\rangle_p^{\otimes n_p}.
\end{equation}
Post-selecting the all-zero state of the $p$-register therefore yields
\begin{equation}\label{eq:zeroth-recovered-state}
 |\psi_{\mathrm{rec}}^{(0)}\rangle
 \propto
 \sum_{\bm{j}}
 \left(
   \sum_{l=0}^{N_p-1}
   \psi_{\bm{j},l}(T)
 \right)
 |\bm{j}\rangle_{\bm{x}},
\end{equation}
which encodes the discrete zeroth moment up to known normalization and
quadrature factors. The corresponding circuit is
\begin{equation}\label{eq:zeroth-moment-circuit}
\Qcircuit @C=1.0em @R=0.5em {
\lstick{l_0}
 & \multigate{3}{H^{\otimes n_p}}
 & \meter
 & \rstick{|0\rangle}
\\
\lstick{l_1}
 & \ghost{H^{\otimes n_p}}
 & \meter
 & \rstick{|0\rangle}
\\
\lstick{\vdots}
 & \ghost{H^{\otimes n_p}}
 & \vdots
 &
\\
\lstick{l_{n_p-1}}
 & \ghost{H^{\otimes n_p}}
 & \meter
 & \rstick{|0\rangle}
\\
\lstick{q}
 & \qw
 & \meter
 &
}
\end{equation}

\subsubsection*{Affine observables and solution recovery:
$G(p)=ap+b$.}

The discrete affine observable is
\begin{equation}\label{eq:affine-observable}
 \langle ap+b\rangle_{\bm{j}}
 =
 \frac{1}{N_p}
 \sum_{l=0}^{N_p-1}
 (ap_l+b)\psi_{\bm{j},l}(T).
\end{equation}
Let
\begin{equation}\label{eq:affine-binary}
    ap_l+b =
 (ap_{\min}+b)
 +
 ah_p\sum_{m=0}^{n_p-1}l_m2^m,\quad\quad
 p_l=p_{\min}+lh_p,
 \quad
 l=\sum_{m=0}^{n_p-1}l_m2^m,
\end{equation}
Thus, an affine observable can be encoded directly from the binary
digits of the $p$-register without introducing an additional value
oracle.

Let
\[
 M_{a,b}
 :=
 \max_{0\leq l<N_p}|ap_l+b|
\]
and choose a sufficiently small scaling parameter $\theta_0$. Define
the offset and bit-dependent rotation angles by
\begin{equation}\label{eq:affine-angles}
 \theta_{\mathrm{off}}
 =
 \frac{\theta_0(ap_{\min}+b)}{M_{a,b}},
 \qquad
 \theta_m^{(a)}
 =
 \frac{\theta_0a2^mh_p}{M_{a,b}},
 \quad
 m=0,\ldots,n_p-1.
\end{equation}

\par\paragraph{Step 1: Affine encoding.}
Starting from an ancilla qubit $|0\rangle_a$, we apply the unconditional
rotation $R_y(2\theta_{\mathrm{off}})$, followed by
$R_y(2\theta_m^{(a)})$ controlled by the $m$-th qubit of the
$p$-register. For a basis state $|l\rangle_p$, the accumulated rotation angle is
\begin{equation}\label{eq:affine-total-angle}
 \Theta_l^{(a,b)}
 =
 \theta_{\mathrm{off}}
 +
 \sum_{m=0}^{n_p-1}
 l_m\theta_m^{(a)}
 =
 \frac{\theta_0(ap_l+b)}{M_{a,b}}.
\end{equation}
The resulting transformation is
\begin{equation}\label{eq:affine-ancilla}
 |l\rangle_p|0\rangle_a
 \longmapsto
 |l\rangle_p
 \left[
   \cos\bigl(\Theta_l^{(a,b)}\bigr)|0\rangle_a
   +
   \sin\bigl(\Theta_l^{(a,b)}\bigr)|1\rangle_a
 \right].
\end{equation}
In the small-angle regime,
\begin{equation}
 \sin\bigl(\Theta_l^{(a,b)}\bigr)
 =
 \frac{\theta_0(ap_l+b)}{M_{a,b}}
 +
 \mathcal{O}(\theta_0^3).
\end{equation}

\par\paragraph{Step 2: Hadamard summation.}
We apply $H^{\otimes n_p}$ to the $p$-register. The
$|0\rangle^{\otimes n_p}$ component then contains
\[
 \frac{1}{\sqrt{N_p}}
 \sum_{l=0}^{N_p-1}
 \sin\bigl(\Theta_l^{(a,b)}\bigr)
 \psi_{\bm{j},l}(T)
\]
in the $\bm{x}$-register.

\par\paragraph{Step 3: Post-selection.}
Post-selecting $|0\rangle^{\otimes n_p}$ on the $p$-register and
$|1\rangle_a$ on the ancilla yields
\begin{equation}\label{eq:affine-recovery}
 |\psi_{\mathrm{rec}}^{(a,b)}\rangle
 \propto{}
 \frac{\theta_0\sqrt{N_p}}{M_{a,b}}
 \sum_{\bm{j}}
 \langle ap+b\rangle_{\bm{j}}
 |\bm{j}\rangle_{\bm{x}}+
 \mathcal{O}(\theta_0^3).
\end{equation}

Solution recovery corresponds to the particular affine observable obtained by
setting
\[
 a=1,
 \qquad
 b=0,
 \qquad
 G(p)=p.
\]

The affine circuit, which also gives the solution-recovery circuit when
$(a,b)=(1,0)$, is
\begin{equation}\label{eq:recovery_circuit}
\Qcircuit @C=0.8em @R=0.4em {
\lstick{l_0}
 & \qw
 & \ctrl{4}
 & \qw
 & \qw
 & \qw
 & \multigate{3}{H^{\otimes n_p}}
 & \meter
 & \rstick{|0\rangle}
\\
\lstick{l_1}
 & \qw
 & \qw
 & \ctrl{3}
 & \qw
 & \qw
 & \ghost{H^{\otimes n_p}}
 & \meter
 & \rstick{|0\rangle}
\\
\lstick{\vdots}
 & \qw
 & \qw
 & \qw
 & \ddots
 & \qw
 & \ghost{H^{\otimes n_p}}
 & \vdots
 &
\\
\lstick{l_{n_p-1}}
 & \qw
 & \qw
 & \qw
 & \qw
 & \ctrl{1}
 & \ghost{H^{\otimes n_p}}
 & \meter
 & \rstick{|0\rangle}
\\
\lstick{|0\rangle_a}
 & \gate{R_y(2\theta_{\mathrm{off}})}
 & \gate{R_y(2\theta_0^{(a)})}
 & \gate{R_y(2\theta_1^{(a)})}
 & \cdots
 & \gate{R_y(2\theta_{n_p-1}^{(a)})}
 & \qw
 & \meter
 & \rstick{|1\rangle}
\\
\lstick{q}
 & \qw
 & \qw
 & \qw
 & \qw
 & \qw
 & \qw
 & \meter
 &
}
\end{equation}

\subsubsection*{General nonlinear observables: $G(p)=\eta(p)$}

For a general nonlinear function $\eta$, such as a convex entropy
function, the values $\eta(p_l)$ cannot, in general, be represented as a
linear combination of the binary digits of $l$. We therefore introduce
a coherent value oracle. Let
\begin{equation}
    \eta_l := \eta(p_l),
    \qquad
    M_\eta \geq \max_{0\leq l<N_p}|\eta_l|,
    \qquad
    \bar{\eta}_l := \frac{\eta_l}{M_\eta}\in[-1,1].
    \label{eq:eta-normalization}
\end{equation}
Using an $n_\eta$-qubit value register, the oracle acts as
\begin{equation}
    O_\eta:
    |l\rangle_p|0\rangle_\eta^{\otimes n_\eta}
    \longmapsto
    |l\rangle_p|\widetilde{\eta}_l\rangle_\eta,
    \label{eq:eta-oracle}
\end{equation}
where $\widetilde{\eta}_l$ is a fixed-point approximation of
$\bar{\eta}_l$. The oracle can be implemented either by a coherent lookup of
classically precomputed values $\eta_l$ or by reversible arithmetic
for evaluating $\eta(p_l)$.

Writing the oracle output in signed binary form,
\begin{equation}
    \widetilde{\eta}_l
    =
    (-1)^{s_l}
    \sum_{r=1}^{n_\eta-1}
    \eta_{l,r}2^{-r},
    \qquad
    s_l,\eta_{l,r}\in\{0,1\},
    \label{eq:eta-binary}
\end{equation}
the bits $\eta_{l,r}$ control rotations with angles
$\theta_0 2^{-r}$ on the ancillary qubit, while the sign bit $s_l$
determines the rotation direction. The accumulated angle is therefore
\begin{equation}
    \Theta_l^{(\eta)}
    =
    \theta_0\widetilde{\eta}_l.
    \label{eq:eta-total-angle}
\end{equation}
Consequently,
\begin{equation}
|l\rangle_p|\widetilde{\eta}_l\rangle_\eta|0\rangle_a
\rightarrow
|l\rangle_p|\widetilde{\eta}_l\rangle_\eta\otimes
\left[
\cos\bigl(\Theta_l^{(\eta)}\bigr)|0\rangle_a
+
\sin\bigl(\Theta_l^{(\eta)}\bigr)|1\rangle_a
\right].
\label{eq:eta-controlled-rotation}
\end{equation}

The oracle is then uncomputed by applying $O_\eta^\dagger$. The
remaining Hadamard summation and post-selection steps are identical to those
used for the affine observable. 

\subsubsection*{Ensemble averages over multiple initial data.}

For $M$ initial conditions
$\{u_0^{[k]}\}_{k=1}^{M}$, define the ensemble-averaged initial
distribution by
\begin{equation}
    \psi_{0,\bm{j},l}^{\mathrm{ens}}
    :=
    \frac{1}{M}
    \sum_{k=1}^{M}
    \delta_\omega
    \bigl(
        p_l-u_0^{[k]}(\bm{x}_{\bm{j}})
    \bigr).
    \label{eq:ensemble-initial-distribution}
\end{equation}
The corresponding initial quantum state is
\begin{equation}
\begin{aligned}
|\bm{\Psi}_0^{\mathrm{ens}}\rangle
=
\frac{1}{\mathcal{N}_{\mathrm{ens}}}
\sum_{l,\bm{j}}
\psi_{0,\bm{j},l}^{\mathrm{ens}}
|l\rangle_p|\bm{j}\rangle_{\bm{x}},
\end{aligned}
\label{eq:ensemble-initial-state}
\end{equation}

Since the evolution equation for $\psi$ is linear, evolving
$\psi_0^{\mathrm{ens}}$ yields
\begin{equation}
    \psi^{\mathrm{ens}}(T,\bm{x},p)
    =
    \frac{1}{M}
    \sum_{k=1}^{M}
    \psi^{[k]}(T,\bm{x},p),
    \label{eq:ensemble-evolved-distribution}
\end{equation}
where $\psi^{[k]}$ denotes the distribution evolved from the initial
condition $u_0^{[k]}$. Applying any of the preceding observable
protocols therefore gives
\begin{equation}
\begin{aligned}
\langle G\rangle_{\mathrm{ens}}(T,\bm{x})
&=
\int_{\mathbb{R}}
G(p)\psi^{\mathrm{ens}}(T,\bm{x},p)\,dp
\\
&=
\frac{1}{M}
\sum_{k=1}^{M}
\langle G\rangle^{[k]}(T,\bm{x}).
\end{aligned}
\label{eq:ensemble-average}
\end{equation}
A label register is required only if observables associated with individual initial conditions need to be retained or accessed separately. Once the ensemble-averaged initial state
\eqref{eq:ensemble-initial-state} has been prepared, the subsequent
Hamiltonian-simulation and observable-estimation circuits are
independent of $M$.

\begin{lem}[Observable discretization error]
\label{lemma_epsilon_G}
Let \(G\in W^{2,\infty}(\mathbb R)\), and let
\(\langle G\rangle_{\bm j,h}(T)\) be the discrete approximation of
\(\langle G\rangle(T,\bm x_{\bm j})\), obtained using a symmetric
mollifier of width \(\omega\), a \(p\)-grid spacing \(h_p\), and a
first-order spatial discretization with spacing \(h\). Then
\begin{equation}
\label{eq:epsilon_omega}
\left|
\langle G\rangle(T,\bm x_{\bm j})
-
\langle G\rangle_{\bm j,h}(T)
\right|
\leq
C\left(
\omega^2+
\frac{h_p^2}{\omega^2}
+
\frac{dhT}{\omega^2}
\right).
\end{equation}
\end{lem}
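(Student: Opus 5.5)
We split the error into three parts: mollification, quadrature in $p$, and spatial discretization. Let $\psi$ be the exact distribution solving \eqref{psi}--\eqref{psi0}, $\psi_\omega$ the exact solution of \eqref{psi} with mollified data $\delta_\omega(p-u_0(x))$, and $\psi_{\bm j,l}(T)$ the semi-discrete solution of \eqref{eq:PsiOde}. Writing $\langle G\rangle_\omega(T,\bm x)=\int G(p)\psi_\omega(T,\bm x,p)\,dp$ and $\langle G\rangle^{Q}_{\bm j}$ for the $p$-quadrature of $G\psi_\omega(T,\bm x_{\bm j},\cdot)$ on the grid $\{p_l\}$ (with the quadrature weight $h_p$ absorbed into the normalization of \eqref{eq:obs_disc}), the triangle inequality gives
\[
|\langle G\rangle-\langle G\rangle_{\bm j,h}|\le |\langle G\rangle-\langle G\rangle_\omega|+|\langle G\rangle_\omega-\langle G\rangle^{Q}_{\bm j}|+|\langle G\rangle^{Q}_{\bm j}-\langle G\rangle_{\bm j,h}|.
\]
We bound these by $C\omega^2$, $Ch_p^2/\omega^2$, and $CdhT/\omega^2$ respectively.

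\emph{Mollification.} Equation \eqref{psi} is a transport equation in $x$ with $p$ as a parameter, so $\psi(T,x,p)=\psi(0,x-TF'(p),p)$. The same holds for $\psi_\omega$, and $\psi_\omega=\delta_\omega*_p\psi$ in the sense of the level set. Equivalently, on each characteristic branch, $\int G\psi_\omega\,dp=\int (G*\delta_\omega)\psi\,dp$ up to Jacobian-weighted terms from \eqref{eq:multivalued}, which are treated identically branch by branch away from caustics. Since $\delta_\omega$ is symmetric with unit mass, a second-order Taylor expansion gives $\|G*\delta_\omega-G\|_\infty\le \tfrac12\|G''\|_\infty\int x^2\delta_\omega\le C\omega^2$. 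The first-moment term vanishes by symmetry. This yields $C\omega^2$, with $C$ depending on $\|G\|_{W^{2,\infty}}$ and on the total mass $\int\psi\,dp$, which is bounded for smooth $u_0$.

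\emph{Quadrature in $p$.} The integrand $G(p)\psi_\omega(T,\bm x,p)$ is supported on $O(1)$ intervals of width $\omega$. Its second derivative in $p$ is of size $\|\delta_\omega''\|\sim\omega^{-3}$ (for the raised cosine; the tent function would need a piecewise argument). The midpoint/trapezoidal error on a support of length $\omega$ is therefore $O(h_p^2\cdot\omega\cdot\omega^{-3})=O(h_p^2/\omega^2)$, which explains the choice $\omega=mh_p$ in \eqref{eq:omega_h}.

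\emph{Spatial discretization.} This is the main obstacle. For each fixed $l$, \eqref{eq:semi_discrete} is a first-order upwind scheme for constant-coefficient advection with speeds $F'_\alpha(p_l)$. The standard consistency-plus-stability argument applies: the scheme is $\ell^\infty$-stable, since the upwind semigroup is a contraction in the max norm, and the truncation error is $\tfrac h2|F'_\alpha|\,\|\partial_{x_\alpha}^2\psi_\omega\|_\infty$ summed over the $d$ directions. Duhamel's formula then gives the error bound $C dhT\,\sup_t\|\partial_x^2\psi_\omega\|_\infty$. The delicate point is the size of $\partial_x^2\psi_\omega$: naively it is $\omega^{-3}$, which combines with the $p$-integration over width $\omega$ to give $h T\omega^{-2}$. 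I would therefore not bound pointwise. Instead I would bound the error of the observable directly, using that the spatial error inherits the structure $\delta_\omega''(p-u)\,|\nabla u|^2$ plus lower-order terms. Integrating against $G$ in $p$ over a support of width $\omega$ with $\|\delta''_\omega\|_{L^1}\sim\omega^{-2}$ gives $CdhT/\omega^2$. This requires $\nabla u_0$ and $F''$ bounded, so that characteristics remain smooth up to time $T$ (or holds branchwise). Summing the three bounds gives \eqref{eq:epsilon_omega}.
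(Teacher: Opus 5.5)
Your three-way split and your Parts I and II match the paper's argument in Appendix~\ref{appendix3}: a symmetric-mollifier Taylor expansion for the $\omega^2$ term, and a rectangle/midpoint bound $h_p^2\|f''\|_{L^1_p}$ with $\|\delta_\omega''\|_{L^1}\sim\omega^{-2}$ for the $h_p^2/\omega^2$ term. Part III also starts the same way (upwind truncation error, Duhamel, stability). However, Part III is not yet a proof, and the missing step is exactly the one you flag.

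The slice-wise $\ell^\infty$ Duhamel estimate gives, for the error $e_{\bm{j},l}:=\psi_{\bm{j},l}-\psi_\omega(\cdot,\bm{x}_{\bm{j}},p_l)$, the bound $|e_{\bm{j},l}(T)|\le CdhT\omega^{-3}$ uniformly in $l$. The upwind scheme spreads each slice over a distance of order $\sqrt{hT}$ in $\bm{x}$, so slices whose support band misses $\bm{x}_{\bm{j}}$ still carry error there. The $\ell^\infty$ bound gives no decay in $l$, so weighting by $h_p$ and summing over all $N_p$ slices yields only $CdhT\omega^{-3}$.

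Your claim that the error ``inherits the structure $\delta_\omega''(p-u)|\nabla u|^2$'' is the leading term of a modified-equation expansion. Asserting it amounts to asserting the needed localization in $p$ at fixed $\bm{x}_{\bm{j}}$, and you control neither the remainder nor the far slices. The paper does not close this either: it asserts $\|\partial_{x_\alpha x_\alpha}\psi^\omega\|\le C\omega^{-2}$ without naming the norm. That bound is true per slice in $L^1_{\bm{x}}$, and the periodic upwind semigroup is doubly stochastic, hence $\ell^1$-contractive. So your Duhamel argument in that norm does prove the Part III bound, but only in an $L^1$-average over $\bm{x}_{\bm{j}}$. The bound is also true at fixed $\bm{x}$ in $L^1_p$, but the $p$-dependent transport does not preserve $\sup_{\bm{x}}\|\cdot\|_{L^1_p}$ (differently moving slices can focus), so no stability estimate is available there. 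The pointwise statement needs one more ingredient.

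To supply it, use positivity rather than contractivity. For each $l$, $e^{tA^{(l)}}$ is the transition matrix of a lattice random walk with mean displacement $tF'(p_l)$ and variance $O(h|F'(p_l)|t)$ per direction. For a $W^{2,\infty}$ mollifier such as the raised cosine, the truncation error in slice $l$ at time $s$ is at most $Cdh\omega^{-3}$ and vanishes unless $|\bm{\phi}(s,\bm{x}_{\bm{k}},p_l)|\le\omega+Ch$. Since $\bm{\phi}(s,\bm{x},p)=\bm{\phi}(T,\bm{x}+(T-s)F'(p),p)$, Duhamel gives $|e_{\bm{j},l}(T)|\le CdhT\omega^{-3}\sup_s\Pr\bigl(|\bm{\phi}(T,\bm{x}_{\bm{j}}-Z_s,p_l)|\le\omega+Ch\bigr)$, with $Z_s$ a centred displacement of size $O(\sqrt{hT})$.

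Now use the no-caustic assumption you already invoke in Part I, i.e.\ $|\partial_p\bm{\phi}|\ge c>0$ near the zero set of $\bm{\phi}(T,\bm{x}_{\bm{j}},\cdot)$. Then $h_p\sum_l$ of these probabilities is $O(\omega+\sqrt{hT})$, so $III\le CdhT\omega^{-3}(\omega+\sqrt{hT})$. This is $O(dhT/\omega^2)$ in the informative regime $hT\lesssim\omega^2$, with $C$ depending on $c$, $\|\nabla u_0\|_\infty$ and $\max|F'|$.

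Two smaller points:
\begin{itemize}
\item The identity $\psi_\omega=\delta_\omega*_p\psi$ is false: near a root, $\delta_\omega(\bm{\phi})$ has $p$-width $\omega/|\partial_p\bm{\phi}|$, not $\omega$. Do the Taylor step in the variable $z=\bm{\phi}(T,\bm{x},p)$ with the Jacobian $1/|\partial_p\bm{\phi}|$ inside the expanded function; this puts $c$ and the $p$-regularity of $\bm{\phi}$ into the constant.
\item With $\omega=mh_p$ for fixed $m$, the term $h_p^2/\omega^2=m^{-2}$ does not vanish. So the quadrature bound calls for $h_p=o(\omega)$ rather than explaining \eqref{eq:omega_h}.
\end{itemize}
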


\begin{proof}
See Appendix~\ref{appendix3}.
\end{proof}

For \(h_p\sim h\), balancing the dominant terms in
\eqref{eq:epsilon_omega} gives
\[
\omega=O(\varepsilon^{1/2}),
\qquad
h=O\left(\frac{\varepsilon^2}{dT}\right),
\qquad
N_x=O\left(dT\varepsilon^{-2}\right).
\]

To improve this dependence, we cancel the leading mollification error
by Richardson extrapolation.

\begin{cor}[Richardson-corrected kernel]
\label{cor: richardson_delta}
Assume additionally that \(G\in W^{4,\infty}(\mathbb R)\) and that the
solution is sufficiently smooth. For
\[
\delta_\omega^{R}
=
\frac{4\delta_{\omega/2}-\delta_\omega}{3},
\]
the corrected observable satisfies
\begin{equation}
\label{eq:richardson-observable-error}
\left|
\langle G\rangle(T,\bm x_{\bm j})
-
\langle G\rangle_{\bm j,h}^{R}(T)
\right|
\leq
C\left(
\omega^4+
\frac{h_p^2}{\omega^2}
+
\frac{dhT}{\omega^2}
\right).
\end{equation}
\end{cor}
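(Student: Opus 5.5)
The plan is to rerun the error splitting behind Lemma~\ref{lemma_epsilon_G} and show that only the pure mollification term improves; the quadrature and spatial terms are simply inherited. Write the error as
\[
\langle G\rangle-\langle G\rangle^{R}_{\bm j,h}
=\bigl(\langle G\rangle-\langle G\rangle^{R}_{\omega}\bigr)
+\bigl(\langle G\rangle^{R}_{\omega}-\langle G\rangle^{R}_{\omega,h_p}\bigr)
+\bigl(\langle G\rangle^{R}_{\omega,h_p}-\langle G\rangle^{R}_{\bm j,h}\bigr).
\]
Here \(\langle G\rangle^{R}_\omega\) is the exact transport of the Richardson-mollified initial datum \(\delta^R_\omega(p-u_0(x))\), integrated exactly in \(p\). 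The quantity \(\langle G\rangle^{R}_{\omega,h_p}\) replaces that integral by the \(p\)-grid quadrature.

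\textbf{Step 1: linearity of the estimator.} The PDE \eqref{psi} is linear and the scheme \eqref{eq:PsiOde} is linear, so every approximate observable is linear in the initial kernel:
\[
\langle G\rangle^{R}=\tfrac{4}{3}\langle G\rangle_{\omega/2}-\tfrac13\langle G\rangle_{\omega}.
\]
The second and third error pieces are therefore combinations, with coefficients \(4/3\) and \(-1/3\), of the corresponding pieces in Lemma~\ref{lemma_epsilon_G} at widths \(\omega\) and \(\omega/2\). Each such piece is bounded by \(C(h_p^2/\omega^2+dhT/\omega^2)\), and halving \(\omega\) changes this only by a factor \(4\). This gives the last two terms of \eqref{eq:richardson-observable-error} directly, provided the appendix proof of the lemma bounds these two contributions separately. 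I expect it does, since they arise from the discrete level and do not involve the mollifier moment structure.

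\textbf{Step 2: moment expansion of the mollification error.} The exact transport of \(\delta_\omega(p-u_0(x))\) stays concentrated along characteristics. For smooth single-valued solutions, \(\psi_\omega(T,x,p)=\delta_\omega(p-u(T,x))\) up to a Jacobian factor that equals \(1\) before shocks; I would verify this by the method of characteristics. The key step is then the expansion
\[
\int G(p)\delta_\omega(p-u)\,dp=\int G(u+\omega s)\delta_1(s)\,ds .
\]
Taylor expand \(G\) to fourth order with \(G\in W^{4,\infty}\). The odd moments vanish because the kernel is symmetric, which leaves
\[
G(u)+\tfrac{\omega^2}{2}\mu_2G''(u)+O(\omega^4),\qquad \mu_2=\int s^2\delta_1(s)\,ds .
\]
The combination \(\tfrac43(\cdot)_{\omega/2}-\tfrac13(\cdot)_\omega\) reproduces \(G(u)\) exactly, since \(\tfrac43-\tfrac13=1\). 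It also cancels the \(\omega^2\) term, since \(\tfrac43\cdot\tfrac14-\tfrac13=0\), leaving \(O(\omega^4)\|G^{(4)}\|_\infty\).

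\textbf{Main obstacle.} The delicate point is justifying Step 2 uniformly in \(x\) once the Jacobian weight and a possible smooth dependence of \(\psi_\omega\) on \(p\) inside the support are present. If \(\psi_\omega(T,x,p)=J(p)\,\delta_\omega(p-u)\) with smooth \(J\), I would absorb the weight into \(\tilde G=GJ\). This needs \(J\in W^{4,\infty}\), which is exactly what the hypothesis ``solution sufficiently smooth'' supplies. A second check is that Step 1 is legitimate: the constant \(C\) in Lemma~\ref{lemma_epsilon_G} must depend on \(\omega\) only through the displayed powers. Otherwise the discrete error terms would have to be rederived from the appendix using the negative-lobe-free bound \(\|\delta^R_\omega\|_{W^{k,1}}\lesssim\omega^{-k}\).
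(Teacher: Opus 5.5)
Your overall route is the one the paper intends. The paper gives no separate proof of this corollary. Its Appendix~\ref{appendix4} uses the three terms in the same way as the Part I/II/III split in the appendix proof of Lemma~\ref{lemma_epsilon_G}: Parts II and III are inherited by linearity (your constant $\tfrac43\cdot 4+\tfrac13$ is fine), and Part I is sharpened by cancelling the $\mu_2\omega^2$ term. Step 1 and the cancellation arithmetic in Step 2 are correct.

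The step that fails as written is your description of the transported kernel.

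\emph{What goes wrong.} By characteristics, the exact transport of $\delta_\omega(p-u_0(x))$ is $\psi_\omega(T,x,p)=\delta_\omega(\phi(T,x,p))$, where $\phi(T,x,p)=p-u_0(x-F'(p)T)$.
\begin{itemize}
\item At the root, $\partial_p\phi=1+T\,F''(u)\cdot\nabla u_0(x_0)$, with $x_0=x-F'(u)T$ the foot of the characteristic. This is not $1$ before shocks; for Burgers it equals $1+Tu_0'(x_0)$.
\item $\psi_\omega$ is not of the form $J(p)\,\delta_\omega(p-u)$ at all. Its support $\{|\phi|\le\omega\}$ has width about $2\omega/|\partial_p\phi|$.
\item It is not symmetric in $p-u$, so the odd $p$-moments do not vanish. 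The first is of order $\omega^2$, not zero, when $\partial_p^2\phi\neq 0$.
\end{itemize}
So absorbing a multiplicative weight into $GJ$ does not apply.

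\emph{The repair.} Change variables to $s=\phi(T,x,p)$. This is legitimate for small $\omega$, since $\phi(T,x,\cdot)$ is strictly monotone on $\{|\phi|\le\omega\}$ near a simple root:
\[
\int G(p)\,\delta^R_\omega(\phi(T,x,p))\,dp=\int \tilde G(s)\,\delta^R_\omega(s)\,ds,\qquad \tilde G(s)=\frac{G(p(s))}{|\partial_p\phi(T,x,p(s))|}.
\]
Now the kernel is exactly symmetric, and your fourth-order Taylor computation applies verbatim to $\tilde G$. The leading term $\tilde G(0)=G(u)/|\partial_p\phi|$ is exactly $\langle G\rangle=\int G\,\delta(\phi)\,dp$, so no Jacobian needs to equal $1$.

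``Sufficiently smooth'' should then be read as $\tilde G\in W^{4,\infty}$ near $s=0$. This amounts to smooth enough $u_0$ and $F$, with $\partial_p\phi$ bounded away from $0$ on $\{|\phi|\le\omega\}$. In the multivalued case with simple roots, apply it branchwise.

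The paper's own Part I has the same looseness: it writes $\psi^\omega=\delta_\omega(p-u)$ and absorbs the Jacobian weights into a constant. For Richardson extrapolation that is not enough, because you need the $\omega^2$ coefficient to be exactly independent of $\omega$. The substitution is what guarantees this.
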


For \(h_p\sim h\), taking
\(\omega=O(\varepsilon^{1/4})\) and
\(h=O(\varepsilon^{3/2}/(dT))\) gives
\(N_x=O(dT\varepsilon^{-3/2})\).

\section{Error and Complexity Analysis}
\label{sec:error}

\par This section establishes rigorous error bounds and gate complexity estimates for the quantum algorithm. The dominant source of error is the Lie--Trotter--Suzuki decomposition used in the Hamiltonian simulation, with additional contributions from the finite difference discretization (controlled by $h$ and $n_x$) and the Schr\"{o}dingerisation of the $w$-variable (controlled by $n_w$).

\subsection{Trotter error}
\par Our operators $U_1(-\tau)$, $V_1(-\tau)$, $U_2(\tau)$, $V_2(\tau)$ for a single spatial dimension $\alpha$ are structurally analogous to those studied in \cite{Hu2024QuantumCF}, Lemmas~4 and~7. Define the scaled flux coefficients
\begin{equation}\label{eq:beta_def}
    \beta_{1,\alpha,l} := \frac{P_{\alpha,l}^- - P_{\alpha,l}^+}{2h},\quad \beta_{2,\alpha,l} :=\frac{|P_{\alpha,l}^- + P_{\alpha,l}^+|}{2h}\quad |\beta_{1,\alpha,l}|=|\beta_{2,\alpha,l}| = \frac{|F'_\alpha(p_l)|}{2h},
\end{equation}
and let $\beta_{\max} := \max_{\alpha,l}(\beta_{2,\alpha,l})$. For a fixed $l$ the Hamiltonian decomposes as $H_C^{(l)} = \sum_{\alpha=1}^{d}(H_{1,\alpha}^{(l)} + H_{2,\alpha}^{(l)})$ where
\begin{align}
    H_{1,\alpha}^{(l)} &=  \underbrace{\beta_{1,\alpha,l} (S_-^{(\alpha)} - S_+^{(\alpha)})}_{A_1^{(\alpha,l)}} \otimes \; D_w, \\
    H_{2,\alpha}^{(l)} &= -i\underbrace{\beta_{2,\alpha,l} (S_-^{(\alpha)} + S_+^{(\alpha)} - 2I)}_{A_2^{(\alpha,l)}} \otimes \; I_{N_w}\\
    U_\alpha^{(l)} &:= e^{i(H^{(l)}_{1,\alpha}+H^{(l)}_{2,\alpha})\tau} ,\quad \tilde{U}_\alpha^{(l)} := e^{i(H^{(l)}_{2,\alpha})\tau}\cdot e^{i(H^{(l)}_{1,\alpha})\tau}.
\end{align}\par
The Trotter errors for $U_{1,\alpha}^{(l)}$ and $U_{2,\alpha}^{(l)}$ satisfy \cite{Hu2024QuantumCF}:
\begin{align}
    \|U_{1,\alpha}^{(l)}(-\tau) - V_{1,\alpha}^{(l)}(-\tau)\| &\leq \beta_{1,\alpha,l}^2 \tau^2 n_x\cdot \frac{1}{R^2},\label{eq:hu_U1V1}  \\
    \|U_{2,\alpha}^{(l)}(\tau) - V_{2,\alpha}^{(l)}(\tau)\| &\leq \beta_{2,\alpha,l}^2 \tau^2 n_x.\label{eq:hu_U2V2}
\end{align}

\par 
Since operators for different spatial dimensions act on disjoint qubit registers, $[H_{*,\alpha}^{(l)}, H_{*,\beta}^{(l)}] = 0$ for $\alpha \neq \beta$. The only nonzero commutators are within the same dimension $\alpha$. By the first-order Trotter formula \cite{trotter1}:
\begin{equation}
     \|\tilde{U}_\alpha^{(l)}(\tau)- U_\alpha^{(l)}(\tau)\| \leq \frac{\tau^2}{2} \|[H_{1,\alpha}^{(l)}, H_{2,\alpha}^{(l)}]\|.
\end{equation}
Since $[S_-^{(\alpha)}, S_+^{(\alpha)}] = \sum_{j=1}^{n_x} [s_{j,\alpha}^-, s_{j,\alpha}^+]$, $\|[s_{j}^-, s_{j}^+]\| = 1$, and $\|D_w\| = 1/R$, the commutator norm satisfies
\begin{equation}
    \|[H_{1,\alpha}^{(l)}, H_{2,\alpha}^{(l)}]\| \leq 2\beta_{1,\alpha,l}\,\beta_{2,\alpha,l}\, n_x \|D_w\| \leq 2\beta_{\max}^2 n_x R.
\end{equation}
Summing over dimensions and incorporating the per-operator bounds \eqref{eq:hu_U1V1}--\eqref{eq:hu_U2V2}:
\begin{align}
    \|e^{i H_C^{(l)} \tau} - V_C^{(l)}(\tau)\|&=d\cdot \|U_{\alpha}^{(l)}(\tau)-V_{\alpha}^{(l)}(\tau)\|\\
    &\leq d\cdot\left(\|U_{\alpha}^{(l)}(\tau)-\tilde{U}_{\alpha}^{(l)}(\tau)\|+\|\tilde{U}_{\alpha}^{(l)}(\tau)-V_{\alpha}^{(l)}(\tau)\|\right)\\
    &\leq d\cdot\left(\frac{\tau^2}{2}\|H_{1,\alpha}^{(l)},H_{2,\alpha}^{(l)}\|+N_w\cdot\|U_{1,\alpha}^{(l)}(-\tau) - V_{1,\alpha}^{(l)}(-\tau)\|+\|U_{2,\alpha}^{(l)}(\tau) - V_{2,\alpha}^{(l)}(\tau)\|\right) \\
    &\leq d\tau^2\beta^2_{\max} n_x\frac{1}{R}+ d\tau^2\beta^2_{\max}n_x\frac{N_w}{R^2}+d\tau^2\beta^2_{\max} n_x\\
    &\leq 3d\tau^2\beta^2_{\max}n_x\frac{N_w}{R^2}
\end{align}
Since $R=O(1)$ and $N_w\gg 1 $, the $N_w/R^2$ dominates. Finally, both the exact evolution and its circuit approximation are block diagonal in the \(p\)-register. Hence,
\[
\left\|
U_C(\tau)-V_C(\tau)
\right\|
=
\max_l
\left\|
e^{iH_C^{(l)}\tau}
-
V_C^{(l)}(\tau)
\right\|,
\]
so no additional factor \(N_p\) arises from combining the \(p\)-blocks.

\begin{thm}\label{thm:trotter_error}
Consider the Schr\"{o}dinger equation $\dot{\mathbf{u}}(t) = i\mathbf{H}_C\mathbf{u}(t)$ with Hamiltonian $\mathbf{H}_C$ given by \eqref{HHJ}. The time evolution operator $U_C(\tau) = \exp(i\mathbf{H}_C\tau)$ is approximated by $V_C(\tau)$ in \eqref{VC} with
\begin{equation*}
    \|U_C(\tau)-V_C(\tau)\| \leq 3d\tau^2\beta^2_{\max}n_x\frac{N_w}{R^2},\qquad \beta_{\max}=\frac{\max_{\alpha,l}|F'_\alpha(p_l)|}{2h}.
\end{equation*}\par 
With $r=T/\tau$ steps and first order Trotter error accumulating linearly:$\varepsilon_{\tau}\leq r\cdot \|U_C(\tau)-V_C(\tau)\|$. Solving for $\tau$ and $r$:
\begin{equation}\label{eq:r_count}
    \tau = \frac{\varepsilon_\tau}{3dn_xT\beta^2_{\max}N_w/R^2},\quad r=\frac{3dn_xT\beta^2_{\max}N_w/R^2}{\varepsilon_\tau}
\end{equation}
\end{thm}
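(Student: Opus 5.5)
The bound is assembled from three ingredients, in this order: the block-diagonal structure in the $p$-register, the commuting of different spatial directions, and a within-direction error split.

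\emph{Reduction to one $p$-block.} Both $U_C(\tau)$ and $V_C(\tau)$ have the form $\sum_l |l\rangle\langle l|\otimes(\cdot)$, as seen from \eqref{HHJ} and \eqref{VC}. Hence the operator norm of their difference is $\max_l\|e^{iH_C^{(l)}\tau}-V_C^{(l)}(\tau)\|$, and it suffices to bound this quantity uniformly in $l$.

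\emph{Reduction to one direction.} Fix $l$. Since $H^{(l)}_{*,\alpha}$ acts on the $\alpha$-th spatial register (tensored with the $w$-register only through the diagonal $D_w$), we have
\[
e^{iH_C^{(l)}\tau}=\prod_\alpha U_\alpha^{(l)},\qquad V_C^{(l)}=\prod_\alpha V_\alpha^{(l)}.
\]
Here $V_\alpha^{(l)}$ is the product of the $\alpha$-factor of $V_2^{(l)}$ and the $w$-controlled $\alpha$-factor of $V_w^{(l)}$. There is one subtlety: in $V_C$ all $V_2$ factors precede all $V_w$ factors. Different $\alpha$'s act on disjoint spatial qubits, and the $w$-register is only a control, so the factors can be regrouped direction by direction. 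A telescoping argument for products of unitaries then gives
\[
\Bigl\|\prod_\alpha U_\alpha-\prod_\alpha V_\alpha\Bigr\|\le\sum_\alpha\|U_\alpha-V_\alpha\|\le d\max_\alpha\|U_\alpha-V_\alpha\|.
\]

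\emph{Within-direction split.} Insert $\tilde U_\alpha^{(l)}$ and use the triangle inequality, which leaves two terms.

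The first term is the splitting error. Apply the first-order Trotter bound $\frac{\tau^2}{2}\|[H_{1,\alpha}^{(l)},H_{2,\alpha}^{(l)}]\|$. The commutator factorizes as $\beta_1\beta_2[S_--S_+,S_-+S_+]\otimes D_w$, so its norm is at most $2\beta_{\max}^2n_x\|D_w\|$. Strictly, $\|D_w\|=\max_r|\eta_r|\le N_w/(2R)$ rather than $1/R$. I would use this cruder value, since it is still absorbed by the $N_w/R^2$ term when $R\le 1$ (or $R=O(1)$ with the constant adjusted).

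The second term is $\|\tilde U_\alpha-V_\alpha\|$. Since $\tilde U_\alpha$ and $V_\alpha$ are each products of an $H_2$-part and an $H_1$-part, telescoping bounds it by $\|U_2-V_2\|+\|\sum_r U_1^{\,r-N_w/2}\otimes|r\rangle\langle r|-\sum_r V_1^{\,r-N_w/2}\otimes|r\rangle\langle r|\|$. The second piece is block diagonal in $r$, so its norm is the maximum over $r$. Each power difference is at most $|r-N_w/2|\,\|U_1-V_1\|\le N_w\|U_1-V_1\|$, again by telescoping. Now invoke \eqref{eq:hu_U1V1} and \eqref{eq:hu_U2V2} with $\beta_{1,\alpha,l},\beta_{2,\alpha,l}\le\beta_{\max}$.

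\emph{Combining.} The three contributions are $d\tau^2\beta_{\max}^2n_x$ times $1/R$ (or $N_w/R$), $N_w/R^2$, and $1$. With $R=O(1)$ and $N_w\gg1$ each is dominated by $N_w/R^2$, which yields the factor $3$.

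\emph{Step count.} Linear accumulation over $r=T/\tau$ steps follows from the standard telescoping of $U_C^r-V_C^r$ with unitary factors, giving $\varepsilon_\tau\le r\cdot 3d\tau^2\beta_{\max}^2n_xN_w/R^2=3dTn_x\tau\beta_{\max}^2N_w/R^2$. Solving for $\tau$ gives \eqref{eq:r_count}.

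\emph{Main obstacle.} The delicate step is the $w$-controlled powers. The error of the implemented $V_1$ must be amplified by at most $N_w$, not more, which holds because powers of unitaries telescope. The bound also hinges on a consistent convention for $\|D_w\|$ versus the scaling inside \eqref{eq:hu_U1V1}, and I would check that these do not double-count the $1/R$ factor.
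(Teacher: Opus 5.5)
Your proposal is correct and follows essentially the same route as the paper. You reduce to a single $p$-block by block-diagonality, pick up a factor $d$ from commuting directions, insert $\tilde U_\alpha^{(l)}$ with the first-order commutator bound, amplify the cited $V_1$ error by at most $N_w$ through the $w$-controlled powers, and accumulate linearly over $T/\tau$ steps. Your refinements are also sound: indeed $\|D_w\|=N_w/(2R)$ rather than $1/R$, so absorbing the commutator term into $N_w/R^2$ requires $R\le 2$. Also note that $r=T/\tau$ actually gives $3dn_xT^2\beta_{\max}^2N_w/(R^2\varepsilon_\tau)$, with $T^2$ rather than the $T$ displayed in \eqref{eq:r_count}, which is the form the paper itself uses later in its gate count.
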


\subsection{Complexity analysis}
\par The gate complexity of the full algorithm is given by the following lemmas:

\begin{lem}[Per-step gate count]\label{lem:single_step_gates}
The operator $V_C(\tau)$ in \eqref{VC} uses $Q_1=\mathcal{O}(d N_p N_w n_x)$ single-qubit gates and at most $Q_2=\mathcal{O}(d N_p N_w n_x^2)$ CNOT gates, for $n_x \geq 3$.
\end{lem}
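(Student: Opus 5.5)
We count gates bottom-up, following the structure of $V_C(\tau)$ in \eqref{VC}: first the single-dimension blocks $V_1$ and $V_2$, then the $d$-fold spatial products, then the binary-controlled $w$-part $V_w^{(l)}$, and finally the multiplexing over the $p$-register as in Remark~\ref{rem:nonlinear-flux}.

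\textbf{Step 1: one dimension.} By \eqref{V1}, $V_1(-\tau)$ consists of a global phase, $U_{11}(\tau)$, and the $n_x$ factors $W_j(\gamma_1,\tau)$ making up $U_{12}(\tau)$. Each $W_j$ has the form $U_j(0)\,CRZ_j^{1,\dots,j-1}\,U_j(0)^\dagger$. The factor $U_j(0)$ contributes one Hadamard, one phase gate and $j-1$ CNOTs. For the multi-controlled $RZ$ on $j$ qubits we use a standard ancilla-free decomposition (e.g.\ Barenco-type or the one used in \cite{Sato2024HamiltonianSF}). This costs $\mathcal O(1)$ single-qubit rotations per control, hence $\mathcal O(j)$ single-qubit gates, and $\mathcal O(j)$ CNOTs; the requirement $n_x\ge 3$ is where this linear-size decomposition applies. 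Summing over $j=1,\dots,n_x$ gives $\mathcal O(n_x^2)$ CNOTs. The single-qubit count will be stated as $\mathcal O(n_x)$ only if the multi-controlled-rotation decomposition is taken with $\mathcal O(1)$ single-qubit gates per $W_j$, as in \cite{Hu2024QuantumCF}. I would adopt that convention, citing their count for $W_j$ directly. $U_{11}$ adds $\mathcal O(1)$ single-qubit gates and $\mathcal O(n_x)$ CNOTs. The same bookkeeping applies to $V_2$ in \eqref{V2}, since $W'_j$ only adds $2(j-1)$ $X$ gates; these are single-qubit gates, so I would absorb them by noting they cancel pairwise between consecutive factors or accept them in the bound. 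Hence one spatial dimension costs $\mathcal O(n_x)$ single-qubit gates and $\mathcal O(n_x^2)$ CNOTs.

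\textbf{Step 2: the $d$-fold, $w$- and $p$-controlled structure.} For fixed $l$, $\mathcal V_1^{(l)}[s]=\prod_\alpha (V_1(s|P_{\alpha,l}|\tau))_\alpha$ has the same gate count as $d$ copies of $V_1$, because the power $s$ only rescales rotation angles. By \eqref{Vw_l_definition}, $V_w^{(l)}$ consists of $n_w+1$ such blocks, each controlled by one $w$-qubit.

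Here lies the main obstacle: a naive count of $V_w^{(l)}$ gives $\mathcal O(n_w)$ blocks, not $N_w$. To match the claimed $N_w$ factor, I would instead count the multiplexed form $\sum_r |r\rangle\langle r|\otimes(\cdot)$ literally. Combined with the $p$-multiplexing of Remark~\ref{rem:nonlinear-flux}, applied to each of the $N_p N_w$ pairs $(l,r)$, this gives $N_pN_w$ controlled copies of a $d$-dimensional block. The bound $\mathcal O(dN_pN_w n_x)$ is then an upper bound that also covers the cheaper binary construction.

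Adding a control to every gate only changes constants: the extra controls merge into the existing multi-controlled $RZ$, and the $H$, $P$ and CNOT factors need no control because they appear in conjugating pairs $U_j(\cdot)\,(\cdot)\,U_j(\cdot)^\dagger$. I would make this observation precise, since it is what keeps the CNOT count at $\mathcal O(n_x^2)$ per block rather than $\mathcal O(n_x^2(n_p+n_w))$. The $V_2^{(l)}$ part is $w$-independent and is multiplexed over $l$ only, so it contributes a lower-order $\mathcal O(dN_pn_x^2)$ term.

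Multiplying out gives $Q_1=\mathcal O(dN_pN_wn_x)$ and $Q_2=\mathcal O(dN_pN_wn_x^2)$.
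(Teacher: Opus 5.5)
Your proposal follows the paper's proof. The paper also takes $\mathcal{O}(n_x)$ single-qubit and $\mathcal{O}(n_x^2)$ CNOT gates per one-dimensional $V_1$ or $V_2$ block from the linear-size multi-controlled $R_Z$ decomposition of \cite{Hu2024QuantumCF}, under the same single-qubit counting convention you flag, and then multiplies by $d$, $N_p$ and an $N_w$ factor for the $w$-controlled part.

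The paper never counts the extra $p$- and $w$-controls, and your conjugation argument is the right way to handle them. However, merging $c$ extra controls into each $CRZ$ adds $\mathcal{O}(c)$ CNOTs per $W_j$, i.e.\ $\mathcal{O}(n_x(n_p+n_w))$ per block in your literal $(l,r)$ multiplexing. So this step is not \emph{only a change of constants} unless $n_p+n_w=\mathcal{O}(n_x)$, which does hold in the paper's regime where all three register sizes are logarithmic in $\varepsilon^{-1}$. Alternatively, select $l$ with ancillas (e.g.\ by unary iteration) and keep the $n_w+1$ singly $w$-controlled binary blocks.
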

\begin{proof}
    Each $V_{1,\alpha}$($V_{2,\alpha}$) decomposes via multi-controlled $R_Z$\cite{Hu2024QuantumCF}(16m-40 CNOT per m-control): $\mathcal{O}(n_x)$ single-qubit, $\mathcal{O}(n_x^2)$ CNOT. With $N_p$ blocks, $N_w$ factor form $w$-controlled $V_1$ and $d$ spatial dimensions, there's $\mathcal{O}(d N_p N_w n_x)$ single-qubit gates and $\mathcal{O}(d N_p N_w n_x^2)$ CNOT gates. 
\end{proof}
\begin{lem}[Evolution gate count]\label{lem: gate complexity}
The time evolution operator $U_C(T) = \exp(i\mathbf{H}_C T)$ on an $(n_p + d n_x + n_w)$-qubit system can be implemented within additive error $\varepsilon_\tau$ using
\[
Q_{1,evol}=\mathcal{O}\!\left( \frac{d^2 n_x^2 T^2 N_p N_w^2\beta_{\max}^2}{R^2\varepsilon_\tau} \right)
\]
single-qubit gates and
\[
Q_{2,evol}=
\mathcal{O}\!\left( \frac{d^2 n_x^3 T^2 N_p N_w^2\beta_{\max}^2 }{R^2\varepsilon_\tau} \right)
\]
CNOT gates.
\end{lem}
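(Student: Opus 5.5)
The count follows by multiplying the per-step cost in Lemma~\ref{lem:single_step_gates} by the number of Trotter steps in Theorem~\ref{thm:trotter_error}. By \eqref{endcircuit}, the full evolution applies $V_C(\tau)$ exactly $r=T/\tau$ times. The only other gates are the QFT and its inverse on the $w$-register, plus preparation of the initial factor $\bm{\eta}$. These cost $\mathcal{O}(n_w^2)=\mathcal{O}(\log^2 N_w)$ gates, independent of $T$ and $\varepsilon_\tau$, so they are absorbed into the stated bounds.

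\textbf{Step 1 (error budget).} Each $V_C(\tau)$ and $U_C(\tau)$ is unitary, so a telescoping argument gives
\[
\|U_C(\tau)^r-V_C(\tau)^r\|\le r\,\|U_C(\tau)-V_C(\tau)\|.
\]
Theorem~\ref{thm:trotter_error} bounds the right-hand side by $3rd\tau^2\beta_{\max}^2 n_x N_w/R^2 = 3dT\tau\beta_{\max}^2 n_xN_w/R^2$. Requiring this to be at most $\varepsilon_\tau$ gives the choice of $\tau$ and the step count \eqref{eq:r_count}:
\[
r=\frac{3d n_x T\beta_{\max}^2 N_w}{R^2\varepsilon_\tau}.
\]
Since $U_C(T)=U_C(\tau)^r$ exactly, the implementation error is at most $\varepsilon_\tau$.

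\textbf{Step 2 (multiply).} Multiplying $r$ by $Q_1=\mathcal{O}(dN_pN_wn_x)$ gives
\[
Q_{1,evol}=\mathcal{O}\!\left(\frac{d^2n_x^2T N_pN_w^2\beta_{\max}^2}{R^2\varepsilon_\tau}\right).
\]
Multiplying $r$ by $Q_2=\mathcal{O}(dN_pN_wn_x^2)$ gives the CNOT count with one extra factor of $n_x$.

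\textbf{Main obstacle.} The stated bound contains $T^2$, while the naive product above gives only a factor $T$. The main work is therefore reconciling the powers of $T$. I would first try to read the $T^2$ as a deliberately conservative bound: any $T=\Omega(1)$ makes the stated bound an upper bound on the computed one, so the lemma holds as an $\mathcal{O}$-statement in that regime. If instead the intended argument is that the per-step cost grows with $T$, that would have to come from the $w$-controlled powers $\mathcal V_1^{(l)}[2^m]$. I would check whether their angles $s|P_{\alpha,l}|\tau$ require extra Trotter subdivision proportional to $T$ in order to keep \eqref{eq:hu_U1V1} valid with the factor $N_w$. Failing either justification, I would state the bound with a single factor of $T$, which is tighter and still implies the lemma for $T\ge1$.

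A secondary check: the $N_w$ factor in $Q_1$ should indeed come from the binary $w$-controlled decomposition \eqref{Vw_l_definition}, each factor of which is a controlled $V_1$ costing $\mathcal{O}(n_x)$ single-qubit and $\mathcal{O}(n_x^2)$ CNOT gates. I would accept Lemma~\ref{lem:single_step_gates} as stated rather than sharpen its $N_w$ factor to $n_w$.
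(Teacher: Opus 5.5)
\textbf{Gap in Step~1: the step count $r$ is wrong.} Your strategy (telescope the per-step error, then multiply the number of Trotter steps by the per-step counts of Lemma~\ref{lem:single_step_gates}) is exactly the paper's proof. However, Step~1 contains an algebra error, and that error creates the ``main obstacle'' you then try to explain away.

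From $r\,\|U_C(\tau)-V_C(\tau)\|\le 3dT\tau\beta_{\max}^2n_xN_w/R^2\le\varepsilon_\tau$ you get $\tau=\varepsilon_\tau R^2/(3dn_xT\beta_{\max}^2N_w)$, and hence
\[
r=\frac{T}{\tau}=\frac{3dn_xT^2\beta_{\max}^2N_w}{R^2\varepsilon_\tau}.
\]
The quantity you wrote, $3dn_xT\beta_{\max}^2N_w/(R^2\varepsilon_\tau)$, is $1/\tau$, not $T/\tau$. With that value, $U_C(\tau)^r=U_C(1)$ rather than $U_C(T)$, so your own claim ``$U_C(T)=U_C(\tau)^r$ exactly'' is inconsistent with your $r$. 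The displayed formula for $r$ in \eqref{eq:r_count} has the same slip; the paper's proof of this lemma uses the correct $T^2$ expression. The second factor of $T$ is just the standard first-order Trotter scaling: the total error is $r\cdot O(\tau^2)=O(T\tau)$, so $\tau$ must shrink like $1/T$, while the number of steps is $T/\tau$.

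\textbf{Consequences for your proposed resolutions.} None of them is valid.
\begin{itemize}
\item Reading $T^2$ as a conservative bound for $T=\Omega(1)$ relies on the intermediate single-$T$ count. That count is false, so the argument as written does not establish the lemma, even though the conclusion is true.
\item Stating ``the bound with a single factor of $T$, which is tighter'' would assert an incorrect gate count for this construction whenever $T>1$.
\item No extra subdivision of the $w$-controlled powers $\mathcal V_1^{(l)}[2^m]$ is needed or relevant; the $T^2$ comes entirely from the step count.
\end{itemize}

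Once $r$ is corrected, multiplying it by $Q_1=\mathcal{O}(dN_pN_wn_x)$ and $Q_2=\mathcal{O}(dN_pN_wn_x^2)$ gives exactly the stated $Q_{1,evol}$ and $Q_{2,evol}$. Your remark that the QFT and initial-state costs are independent of $T$ and $\varepsilon_\tau$, and hence lower order, is fine.
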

\begin{proof}
    From ~\eqref{eq:r_count}: $r=3dn_xT^2\beta^2N_w/(R^2\varepsilon_\tau)$. From Lemma \ref{lem:single_step_gates}: 
    \begin{align*}
        Q_{1,evol}=r\cdot Q_1 &= \mathcal{O}\left(dn_xT^2\beta_{\max}^2N_w/(R^2\varepsilon_\tau)\cdot dN_pN_wn_x\right)\\
        &=\mathcal{O}(d^2n_x^2T^2\beta_{\max}^2 N_p N_w^2/(R^2\varepsilon_\tau))\\
        Q_{2,evol}=r\cdot Q_2 &= \mathcal{O}\left(dn_xT^2\beta_{\max}^2N_w/(R^2\varepsilon_\tau)\cdot dN_pN_wn_x^2\right)\\
        &=\mathcal{O}(d^2n_x^3T^2\beta_{\max}^2 N_p N_w^2/(R^2\varepsilon_\tau))
    \end{align*}
\end{proof}

\begin{thm}[Total complexity with Richardson-corrected mollification]\label{thm:totalcomplexity}
The evolution circuit plus solution recovery estimates
\(\langle G\rangle(T,x)\) to precision \(\varepsilon\) using at most
\[
Q_{\rm total}^{(1q)}
=
\widetilde O
\left(
d^2T^2N_pN_w^2h^{-2}\varepsilon^{-1}
\right)
\]
single-qubit gates, and
\[
Q_{\rm total}^{(2q)}
=
\widetilde O
\left(
d^2T^2N_pN_w^2h^{-2}\varepsilon^{-1}
\right)
\]
CNOT gates.

Using the Richardson-corrected mollifier,
\[
h^{-2}
=
O\left(
d^2T^2\varepsilon^{-3}
\right),
\]
Hence
\[
Q_{\rm total}^{(1q)}
=
\widetilde O
\left(
d^4T^4N_pN_w^2\varepsilon^{-4}
\right),
\]
\[
Q_{\rm total}^{(2q)}
=
\widetilde O
\left(
d^4T^4N_pN_w^2\varepsilon^{-4}
\right).
\]
\end{thm}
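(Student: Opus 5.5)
The proof splits the target precision \(\varepsilon\) into a discretization part and a simulation part, bounds each using earlier results, and then substitutes the parameter choices into the gate counts of Lemma~\ref{lem: gate complexity}.

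\textbf{Error splitting.} I decompose the total error in \(\langle G\rangle(T,\bm x_{\bm j})\) into two pieces.
\begin{itemize}
\item The first is the discretization and mollification error, controlled by Corollary~\ref{cor: richardson_delta}.
\item The second is the error in preparing the evolved state. It consists of the Trotter error \(\varepsilon_\tau\) from Theorem~\ref{thm:trotter_error}, together with the Schr\"odingerisation truncation error, which I treat as negligible once \(N_w\) and \(R\) are fixed.
\end{itemize}
Since the recovery maps of Section~\ref{subsec:observables} are bounded linear maps applied to the state, a state error of \(\varepsilon_\tau\) changes the recovered observable by \(O(\varepsilon_\tau)\), up to normalization factors that I absorb into the constants and the \(\widetilde O\). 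I allot \(\varepsilon/2\) to each piece. This gives \(\varepsilon_\tau=\Theta(\varepsilon)\).

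\textbf{First bound.} Insert \(\varepsilon_\tau=\Theta(\varepsilon)\) and \(\beta_{\max}=\max|F'_\alpha|/(2h)=O(h^{-1})\) into Lemma~\ref{lem: gate complexity}. Treating \(\max|F'|\) and \(R\) as \(O(1\)) constants, the single-qubit count is \(O(d^2n_x^2T^2N_pN_w^2h^{-2}\varepsilon^{-1})\) and the CNOT count carries \(n_x^3\) in place of \(n_x^2\). Because \(n_x=\lceil\log_2 N_x\rceil\), both polylogarithmic factors are hidden in \(\widetilde O\).

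For the recovery circuits, I add the per-shot costs.
\begin{itemize}
\item The Hadamard layer costs \(O(n_p)\) gates.
\item The affine case needs \(O(n_p)\) controlled rotations.
\item The nonlinear oracle \(O_\eta\) is assumed to cost polylogarithmically, per the stated oracle-access assumption.
\end{itemize}
All of these are dominated by the evolution cost, which gives the first pair of bounds.

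\textbf{Second bound.} Set \(\omega=O(\varepsilon^{1/4})\) and \(h_p\sim h\) in Corollary~\ref{cor: richardson_delta}. The dominant term is \(dhT/\omega^2\), and requiring it to be \(O(\varepsilon)\) forces \(h=O(\varepsilon^{3/2}/(dT))\). Hence \(h^{-2}=O(d^2T^2\varepsilon^{-3})\). Substituting gives \(d^2T^2\cdot d^2T^2\varepsilon^{-3}\cdot\varepsilon^{-1}N_pN_w^2=d^4T^4N_pN_w^2\varepsilon^{-4}\), which is the claimed total.

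\textbf{Main obstacle.} The delicate step is translating the operator-norm bound \(\|U_C-V_C\|\) into a bound on the estimated observable. Two issues arise.
\begin{itemize}
\item The post-selections, namely \(|0\rangle^{\otimes n_p}\), the ancilla \(|1\rangle_a\), and the \(w\)-projection used in Schr\"odingerisation, rescale the error by the inverse square roots of their success probabilities.
\item The small-angle \(O(\theta_0^3)\) term must be kept below \(\varepsilon\).
\end{itemize}
I will handle both by fixing \(\theta_0\) sufficiently small and assuming these success probabilities are bounded below, polynomially in the problem parameters. This lets me absorb the resulting factors and the number of repetitions (amplitude estimation) into the \(\widetilde O\) and constants, consistent with the "standard assumptions" announced in the abstract. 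I would state this assumption explicitly rather than attempt to bound the norms \(\mathcal N_T\).
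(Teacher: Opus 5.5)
Your proposal matches the paper's own proof and is correct at that level of rigor: like the paper, you take $\omega=O(\varepsilon^{1/4})$ from the Richardson corollary, force $h=O(\varepsilon\omega^2/(dT))=O(\varepsilon^{3/2}/(dT))$, insert $\beta_{\max}=O(h^{-1})$ into the evolution gate-count lemma with the powers of $n_x$ hidden in $\widetilde O$, and treat recovery as lower order, the one difference being that the paper handles the Schr\"odingerisation tail by choosing $R=O(\log\varepsilon^{-1})$ so that $e^{-R}\le\varepsilon$ (harmless, since $R$ enters the count only through $R^{-2}$) rather than calling it negligible. One caution about your ``main obstacle'' step, which the paper skips entirely: $\widetilde O$ hides only polylogarithmic factors, so it cannot absorb post-selection probabilities that are merely polynomially bounded below (the Hadamard summation alone succeeds with probability of order $\omega/(p_{\max}-p_{\min})$) or amplitude-estimation repetitions (at least of order $\varepsilon^{-1}$), and you should either read the theorem as a per-circuit gate count, as the paper implicitly does, or carry those factors explicitly.
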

\begin{proof}
    See Appendix \ref{appendix4}.
\end{proof}

\begin{remark}[Improved \(w\)-discretization]\label{remarkofw}
The \(w\)-variable discretization corresponds to a Fourier approximation of the profile \(e^{-|w|}\) on \([-\pi R,\pi R]\). Since \(e^{-|w|}\) is continuous but not \(C^1\) at \(w=0\), the Fourier discretization error is only first order, namely
\[
\varepsilon_w = O(N_w^{-1}),\quad N_w=O(\varepsilon^{-1}).
\]
This dependence can be improved by replacing \(e^{-|w|}\) near \(w=0\) with a smoother function \(g(w)\) while keeping the same exponential decay away from the origin\cite{jin2024schrodingerisationill-posed}. If \(g\in C^k\), then the Fourier approximation error improves to
\[
\varepsilon_w = O(N_w^{-(k+1)}),\quad
N_w=O(\varepsilon^{-1/(k+1)}).
\]
Substituting this into Theorem \ref{thm:totalcomplexity} gives
\[
Q_{\rm total}
=
\widetilde O\left(
d^4T^4N_p
\varepsilon^{-3-\frac{2}{k+1}}
\right).
\]
For instance, taking $g(w) = \bigl(\tfrac{3}{e}-3\bigr)w^3+\bigl(\tfrac{4}{e}-5\bigr)w^2-w+1 \in C^1(\mathbb{R})$ on $[-\pi R, 0]$ yields $k=1$ and $\mathcal{O}(\Delta w^2)$ convergence and hence
\[
Q_{\rm total}
=
\widetilde O\left(
d^4T^4N_p\varepsilon^{-4}
\right).
\]
\end{remark}
\begin{remark}[Linear case of $P_{\alpha,l}$]
There's advantage of the binary implementation in remark \ref{rem: binary} on $N_p$. In the linear case
\(P_{\alpha,l}=l\), \(l=\sum_k l_k2^k\), the \(p\)-dependence can be implemented by \(l_k\)-controlled powers. The \(V_2\)-part requires \(O(n_p)\) controlled
blocks, while the \(V_w\)-part requires \(O(n_p n_w)\) singly or doubly controlled blocks. Thus the \(N_p\)-dependent multiplexing overhead is reduced to
a polylogarithmic overhead in \(N_p\), up to the cost of implementing the controlled one-dimensional blocks.
\end{remark}

\begin{remark}[Quantum advantages]
\label{rem:classical-comparison}

Following the observable-computation framework of
Jin and Liu~\cite{Jin2022QuantumAF}, consider a first-order classical
finite-difference method on a \(d\)-dimensional spatial grid with
\(N_x\) points in each direction. Each time step requires
\(O(dN_x^d)\) operations, while the CFL condition
\(\Delta t=O(h/d)\) requires \(O(TdN_x)\) time steps. Hence,
\begin{equation}
\label{eq:classical-complexity}
C_{\mathrm{cl}}
=
O\left(Td^2N_x^{d+1}\right).
\end{equation}
For first-order accuracy, taking
\(N_x=O(dT/\varepsilon)\) gives
\begin{equation}
C_{\mathrm{cl}}
=
O\left(
d^{d+3}T^{d+2}\varepsilon^{-(d+1)}
\right).
\end{equation}

By contrast, under the parameter choices of
Theorem~\ref{thm:totalcomplexity} with the Richardson-corrected scheme requires $N_p=\Theta(\varepsilon^{-3/4})$, the quantum gate complexity for
estimating the observable is
\[
Q_{\mathrm{total}}
=
\widetilde O\left(
d^4T^4\varepsilon^{-27/4}
\right), \quad\text{or}\quad \widetilde O\left(
d^4T^4\varepsilon^{-4}
\right)\quad \text{when}\quad P_{\alpha,l}=l
\]
The classical method explicitly evolves \(N_x^d\) spatial values,
whereas the quantum algorithm represents the spatial grid using
\(dn_x\) qubits and extracts the observable coherently. The comparison
therefore indicates a potential advantage in high spatial dimensions.

This advantage applies to observable estimation rather than complete
solution reconstruction. Reading out all \(N_x^d\) solution values
would introduce a correspondingly large measurement cost. Efficient
state preparation and implementation of the required coefficient
oracles are also assumed.
\end{remark}

\section{Numerical Experiments}
\label{sec:numerics}

\hspace{1.5em} We present numerical experiments to validate the proposed Schr\"odingerisation-based quantum algorithm for scalar conservation laws. The experiments serve three purposes. First, we test the one-dimensional nonlinear transport dynamics using a traffic-flow flux. Second, we demonstrate that the formulation extends naturally to two spatial dimensions by considering a two-dimensional Burgers equation. Third, we examine the gate-count scaling and the convergence with respect to the auxiliary Fourier resolution \(N_w=2^{n_w}\). The quantum circuits and gate-count evaluations were implemented using the open-source UnitaryLab library
\cite{UnitaryLab2026}.

Throughout this section, we use periodic boundary conditions. In one spatial dimension, we consider
\begin{equation}
\label{eq:numerics_1d}
    \partial_t u + f'(u)\partial_x u = 0,
    \qquad x\in [0,L],
    \qquad u(x,0)=u_0(x).
\end{equation}
In two spatial dimensions, we consider the scalar conservation law
\begin{equation}
\label{eq:numerics_2d}
    \partial_t u
    +
    \partial_x f(u)
    +
    \partial_y f(u)
    =
    0,
    \qquad (x,y)\in [0,L]^2.
\end{equation}
The reference solution is computed by applying the matrix exponential to the semi-discrete level-set transport operator. The classical Schr\"odingerisation solution is computed by direct integration of the lifted Hamiltonian system. The Trotter--Schr\"odingerisation solution is obtained by the  circuit implementation described in the previous sections.

The main numerical parameters are
\[
N_x=2^{n_x},\qquad
N_p=2^{n_p},\qquad
N_w=2^{n_w},
\]
where \(N_x\) is the spatial grid size per dimension, \(N_p\) is the number of level-set grid points, and \(N_w\) is the number of auxiliary Fourier grid points introduced by Schr\"odingerisation. The parameter \(R\) denotes the truncation length in the auxiliary variable, and \(\Delta t\) denotes the product-formula step size in the Trotter implementation.

\subsection{1D traffic-flow example}
\label{subsec:numerics_traffic}

We first consider the one-dimensional traffic-flow-type flux
\[
    f(u)=u(1-u),
    \qquad
    f'(u)=1-2u.
\]
This flux is the classical Lighthill--Whitham--Richards flux when \(u\) is interpreted as a normalized density. In the present experiment, it is used as a nonlinear benchmark flux for the scalar conservation law \eqref{eq:numerics_1d}. The initial data is chosen as a smooth periodic profile,
\[
    u_0(x)=\sin(2\pi x/L),
\]
so that the solution remains smooth over the tested time interval and the accuracy of the Schr\"odingerisation recovery can be clearly observed.

Figure~\ref{fig:traffic_profile} compares the matrix-operator reference solution, the classical Schr\"odingerisation solution, and the Trotter--Schr\"odingerisation solution. The reference solution is shown as the baseline for the semi-discrete level-set equation. The classical Schr\"odingerisation curve tests the accuracy of the warped phase transformation and the recovery from the auxiliary variable. The Trotter--Schr\"odingerisation curve further includes the product-formula implementation error of the quantum circuit.

\begin{figure}[H]
    \centering
    \includegraphics[width=0.72\linewidth]{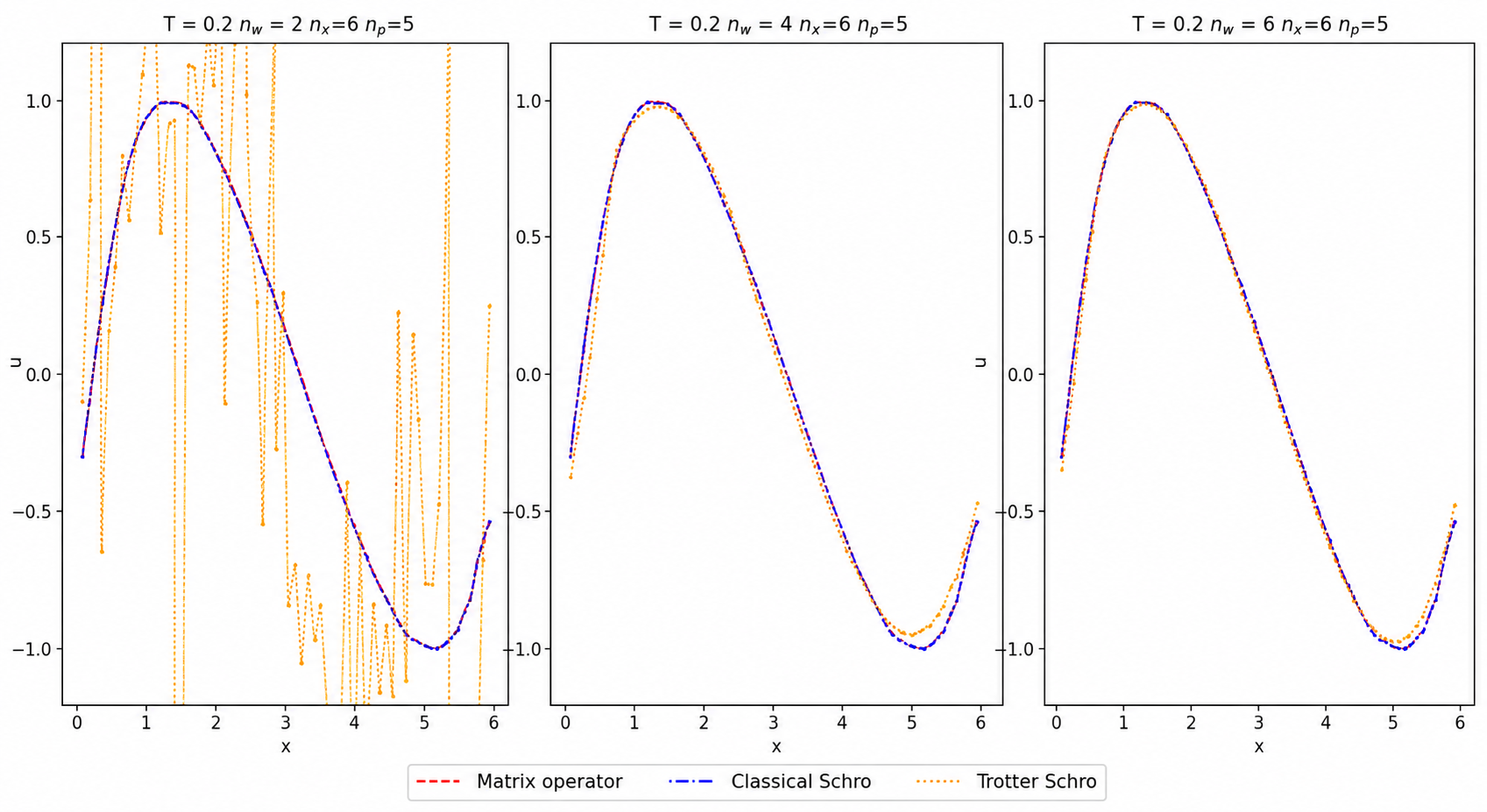}
    \caption{One-dimensional traffic-flow-type equation with \(f(u)=u(1-u)\). The figure compares the matrix-operator reference, the classical Schr\"odingerisation solution, and the Trotter--Schr\"odingerisation solution for different auxiliary resolutions \(n_w\).}
    \label{fig:traffic_profile}
\end{figure}

The results show that the auxiliary resolution \(n_w\) plays a decisive role in the recovery quality. For small \(n_w\), the auxiliary Fourier grid is too coarse to resolve the warped phase profile, and the Trotter--Schr\"odingerisation solution may exhibit visible oscillations. As \(n_w\) increases, the recovered profile becomes stable and agrees closely with the matrix-operator reference. The agreement between the classical Schr\"odingerisation and the Trotter--Schr\"odingerisation curves at sufficiently large \(n_w\) indicates that, at this resolution, the dominant error is no longer the auxiliary-variable recovery error, and the product-formula error is controlled for the chosen step size.

\subsection{2D Burgers equation}
\label{subsec:numerics_2d_burgers}

We next test the two-dimensional extension of the method using the Burgers flux
\[
    f(u)=\frac{u^2}{2}.
\]
Equation~\eqref{eq:numerics_2d} becomes
\begin{equation}
\label{eq:2d_burgers}
    \partial_t u
    +
    u\partial_x u
    +
    u\partial_y u
    =
    0.
\end{equation}
The initial data is a smooth localized Gaussian pulse,
\[
    u_0(x,y)
    =
    \exp\left(
        - (x-L/2)^2 - (y-L/2)^2
    \right).
\]
This example is useful for testing the multidimensional tensor-product structure of the algorithm. In the level-set representation, each \(p_l\)-slice satisfies
\[
    \partial_t \psi_l
    +
    p_l\partial_x\psi_l
    +
    p_l\partial_y\psi_l
    =
    0.
\]
Therefore, the semi-discrete transport operator takes the form
\[
    A_l
    =
    p_l(D_x+D_y),
\]
where \(D_x\) and \(D_y\) are the one-dimensional difference operators embedded into the two spatial directions.

Figure~\ref{fig:2d_burgers} reports the two-dimensional Trotter--Schr\"odingerisation solution for increasing values of \(n_w\). The upper row shows the recovered solution, while the lower row shows the pointwise absolute error with respect to the matrix-operator reference,
\[
    |u_{\rm Trotter}(T,x,y)-u_{\rm ref}(T,x,y)|.
\]

\begin{figure}[H]
    \centering
    \includegraphics[width=0.82\linewidth]{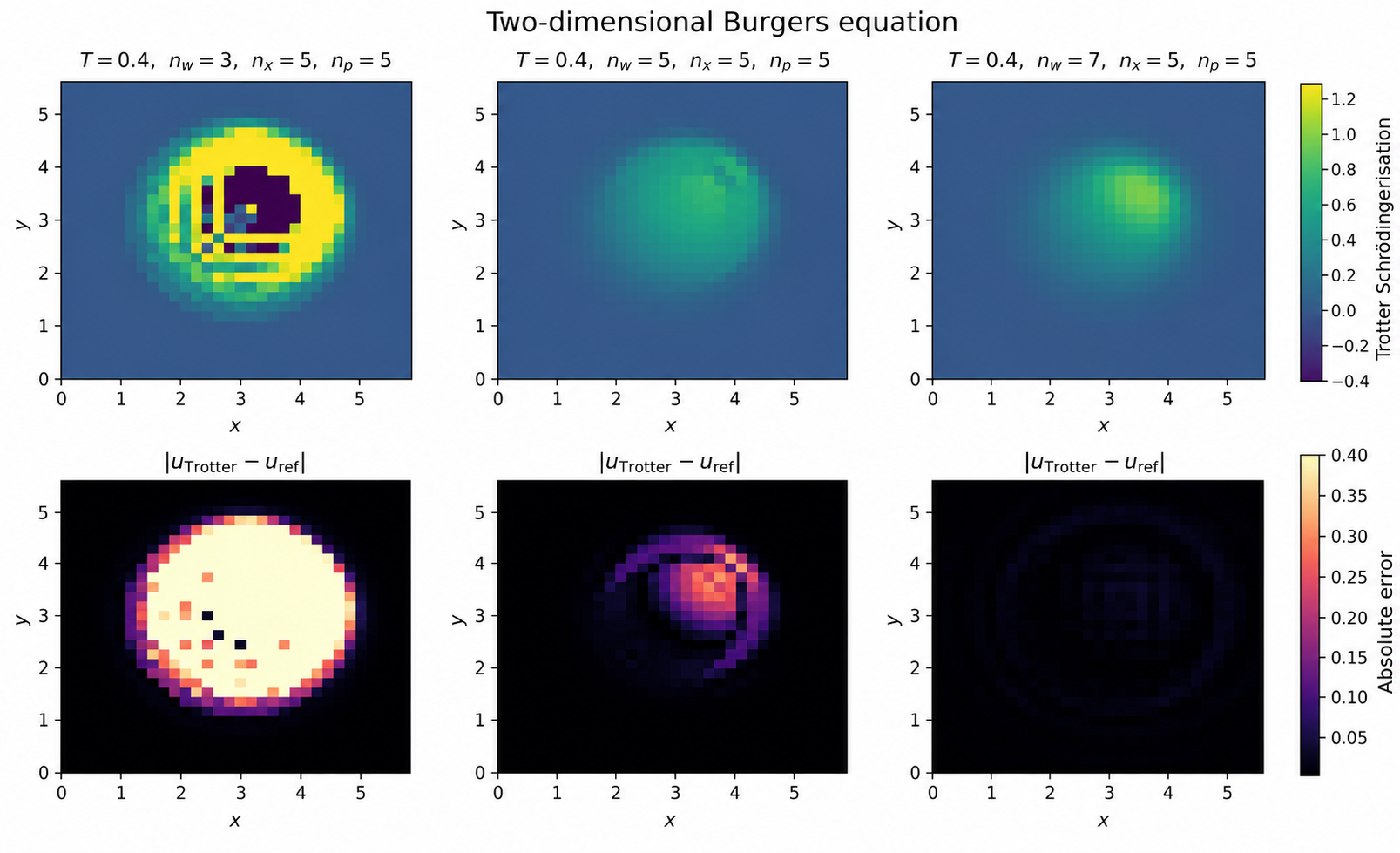}
    \caption{Two-dimensional Burgers equation with Gaussian initial data. The upper row shows the Trotter--Schr\"odingerisation solution for different values of \(n_w\). The lower row shows the absolute error with respect to the matrix-operator reference solution.}
    \label{fig:2d_burgers}
\end{figure}

The two-dimensional experiment is more demanding than the one-dimensional case for two reasons. First, the spatial operator contains contributions from both directions, \(D_x+D_y\), which increases the effective transport strength. Second, the auxiliary-variable recovery error is amplified in the two-dimensional lifted system because each \(p_l\)-slice now evolves on a larger spatial state space. As a result, small \(n_w\) may lead to visible oscillations or phase errors in the recovered solution.

Nevertheless, the qualitative behavior is consistent with the theoretical construction. Increasing \(n_w\) improves the resolution of the auxiliary Fourier variable and reduces the recovery error. The error plots show that the largest discrepancies are localized near the region where the Gaussian is transported and deformed. This is expected, since the level-set density is most concentrated near the moving front and the recovery formula is most sensitive in regions with large gradients. The experiment therefore confirms the multidimensional applicability of the formulation, while also illustrating that two-dimensional Trotter simulations require substantially higher auxiliary resolution than the corresponding one-dimensional tests.

\subsection{Gate complexity and convergence with respect to \(N_w\)}
\label{subsec:numerics_complexity_convergence}

We finally examine the computational scaling of the circuit implementation and the convergence with respect to the auxiliary Fourier resolution. Figure~\ref{fig:gate_scaling} shows the total gate count of the time-evolution circuit \(V_C(T)\) as a function of \(N_w\), for several spatial resolutions \(N_x\).

\begin{figure}[H]
    \centering
    \includegraphics[width=0.72\linewidth]{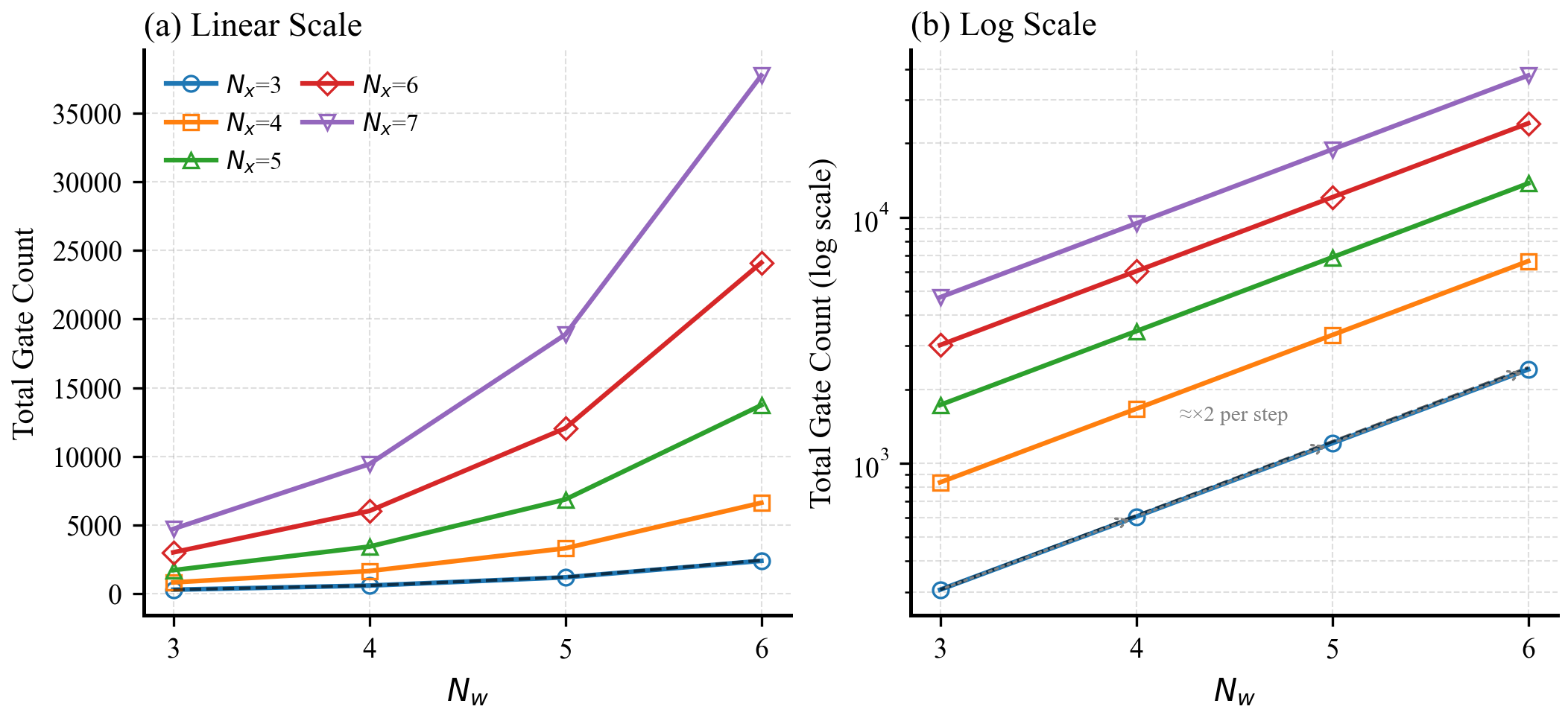}
    \caption{Total gate count of the time-evolution circuit as a function of the auxiliary Fourier grid size \(N_w\), for several spatial resolutions \(N_x\).}
    \label{fig:gate_scaling}
\end{figure}

Figure~\ref{fig:gate_scaling} provides a numerical verification of the gate-complexity estimate in Lemma~\ref{lem: gate complexity} using the one-dimensional Burgers equation as a representative
example. For fixed spatial resolution \(N_x\), increasing the auxiliary Fourier resolution \(N_w\) increases the number of controlled powers associated with the Schr\"odingerisation register.

Figure~\ref{fig:nw_convergence} shows the \(L_2\) error as a function of \(n_w\). The error is computed against the matrix-operator reference solution. This experiment isolates the effect of the auxiliary-variable discretization and recovery. The natural function
\[
    g(w)=e^{-|w|}
\]
is continuous but not differentiable at \(w=0\), and therefore gives only first-order convergence with respect to \(N_w\). Replacing it by a \(C^1\) cubic smoothing improves the observed convergence rate, in agreement with Remark~\ref{remarkofw}.

\begin{figure}[H]
     \centering
     \includegraphics[width=0.58\linewidth]{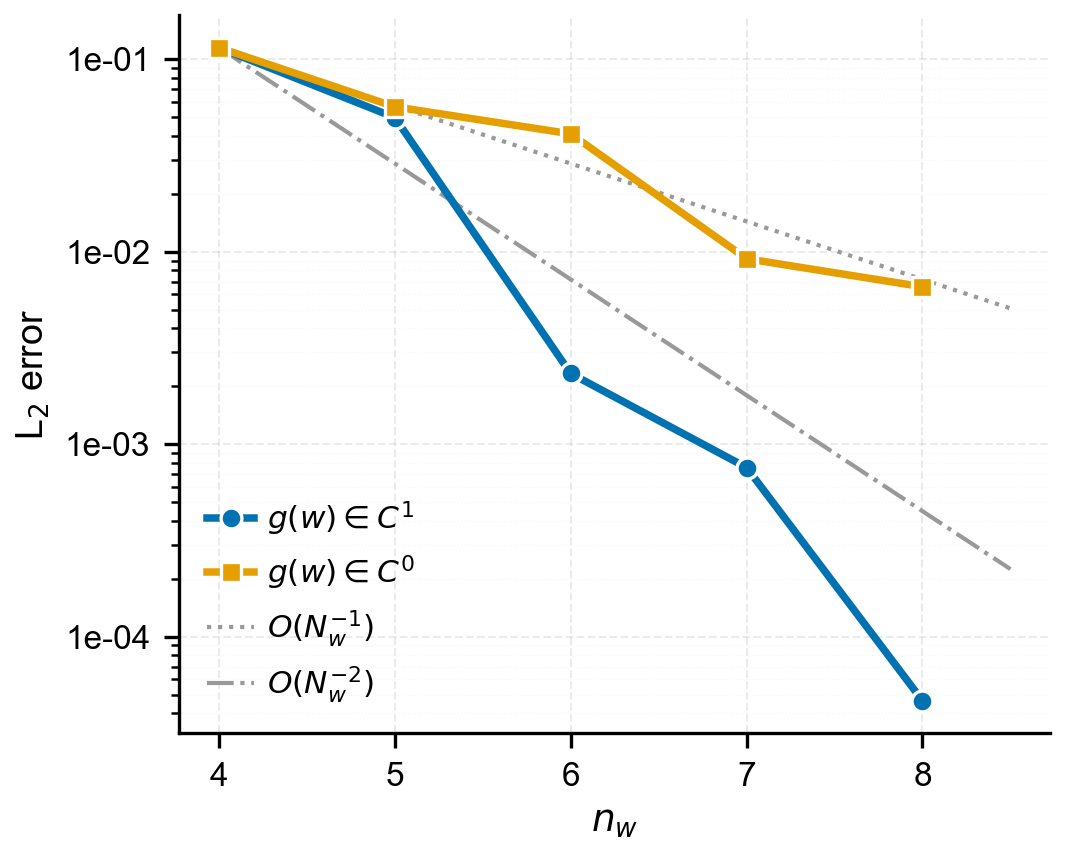}
     \caption{\(L_2\) error as a function of \(n_w\) for different choices of the auxiliary profile \(g(w)\). Smoother profiles improve the convergence rate in the auxiliary variable.}
     \label{fig:nw_convergence}
\end{figure}

The convergence behavior in Figure~\ref{fig:nw_convergence} supports the theoretical role of the auxiliary variable. A smoother choice of \(g(w)\) reduces the Fourier discretization error and improves the accuracy of the recovered solution. In principle, if \(g\in C^\infty\) and the solution remains sufficiently regular in the auxiliary variable, spectral convergence in \(N_w\) can be expected.

Together, Figures~\ref{fig:traffic_profile}--\ref{fig:nw_convergence} validate the main numerical ingredients of the method. The one-dimensional traffic example verifies the nonlinear level-set formulation and Trotter--Schr\"odingerisation recovery in a simple setting. The two-dimensional Burgers example demonstrates the tensor-product extension to higher spatial dimensions. The gate-count and convergence experiments confirm that the observed numerical behavior is consistent with the theoretical complexity and auxiliary-variable error analysis.

\section{Conclusion}
\hspace{1.5em} We developed a quantum algorithm for scalar conservation laws by combining the level-set formulation of \cite{Jin2022QuantumAF} with Schr\"odingerisation-based Hamiltonian simulation. The original $d$-dimensional nonlinear conservation law is first lifted to a $(d+1)$-dimensional linear Liouville equation. After finite-difference discretization, the resulting nonunitary linear system is transformed into Hamiltonian dynamics following the constructions in \cite{Hu2024QuantumCF} and \cite{Sato2024HamiltonianSF}. We then provided explicit quantum circuits for the time evolution and observable recovery, together with error and gate-complexity estimates. Numerical experiments for one- and two-dimensional examples validate the accuracy of the method, its multidimensional extension, and the predicted dependence on the auxiliary Fourier resolution. Future work will extend the present framework from scalar conservation laws to systems of conservation laws, reduce simulation costs through higher-order product formulas and optimized circuit designs, improve state-preparation and observable-recovery techniques, and develop resource-efficient implementations for fault-tolerant quantum architectures.

\bibliographystyle{IEEEtran}
\bibliography{citation}

\appendix
\section{The quantum circuits of $V_1(-\tau)$}\label{appV1circuits}
\hspace{1.5em}In this appendix, we provide the explicit construction of the quantum circuits for the operator $V_1(-\tau)$, which approximates the evolution $U_1(-\tau) = \exp(-i\mathbf{H}_1\tau)$. Based on the Lie-Trotter-Suzuki decomposition derived in Section \ref{section quantum circuits}, the total evolution is decomposed into three primary functional blocks: $U_{12}(\tau)$, $U_{11}(\tau)$, and the global phase shift $Ph(2\gamma_1\tau)$.\par
$V_1(-\tau)$:
\begin{equation*}
    \Qcircuit @C=1.5em @R=1.5em {
   & q_0 & &\qw & \multigate{3}{U_{12}(\tau)} & \qw &\multigate{3}{U_{11}(\tau)}&\qw &\multigate{3}{Ph(2\gamma_1\tau)}&\qw\\
   & q_1 & &\qw & \ghost{U_{12}(\tau)} & \qw &\ghost{U_{11}(\tau)}&\qw&\ghost{Ph(2\gamma_1\tau)}&\qw \\
   &\vdots& &\qw & \ghost{U_{12}(\tau)} & \qw &\ghost{U_{11}(\tau)}&\qw&\ghost{Ph(2\gamma_1\tau)}&\qw \\
   & q_{n_x-1} & &\qw & \ghost{U_{12}(\tau)} & \qw &\ghost{U_{11}(\tau)}&\qw&\ghost{Ph(2\gamma_1\tau)}&\qw\\
   }
\end{equation*}
~\\
\par The operator $U_{11}(\tau)$ is implemented using a sandwich structure involving basis transformation operators and a multi-controlled rotation gate. This corresponds to:$$U_{11}(\tau) = U_{n_x}(0) X_{n_x} CRZ^{1, \dots, n_x-1}(2\gamma_1\tau) X_{n_x} U_{n_x}(0)^\dagger$$
$U_{11}(\tau)$:
\begin{equation*}
    \Qcircuit @C=1.5em @R=1.5em {
   & q_0 & &\qw & \multigate{3}{U_{n_x}(0)} & \qw &\qw&\qw&\ctrl{1}&\qw&\qw&\qw &\multigate{3}{U_{n_x}(0)}&\qw\\
   & q_1 & &\qw & \ghost{U_{n_x}(0)} & \qw &\qw&\qw&\ctrl{1}&\qw&\qw&\qw&\ghost{U_{n_x}(0)}&\qw \\
   &\vdots& &\qw & \ghost{U_{n_x}(0)} & \qw &\qw&\qw&\ctrl{1}&\qw&\qw&\qw&\ghost{U_{n_x}(0)}&\qw \\
   & q_{n_x-1} & &\qw & \ghost{U_{n_x}(0)} & \qw &\gate{X}&\qw&\gate{RZ}&\qw&\gate{X}&\qw&\ghost{U_{n_x}(0)}&\qw\\
}
\end{equation*}
~\\
The operator $U_{12}(\tau)$ represents the product of local evolution terms:$$U_{12}(\tau) = \prod_{j=1}^{n_x} (I^{\otimes (n_x-j)} \otimes W_j(\gamma_1, \tau))$$where each component $W_j(\gamma_1, \tau)$ is defined as:$$W_j(\gamma_1, \tau) = U_j(0) CRZ^{1, \dots, j-1}(-2\gamma_1\tau) U_j(0)^\dagger$$This structure ensures that the spatial correlations defined by the discretized level-set equation are correctly mapped onto the quantum register through localized multi-controlled rotations.\par
$U_{12}(\tau)$:
\begin{equation*}
    \Qcircuit @C=1.5em @R=1.5em {
   & q_0 & &\qw & \gate{W_1} & \qw &\multigate{1}{W_2}&\qw& &\cdots& &\qw&\multigate{3}{W_{n_x-1}}&\qw\\
   & q_1 & &\qw & \qw  &\qw& \ghost{W_2}&\qw& &\cdots& &\qw&\ghost{W_{n_x-1}}&\qw \\
   &\vdots& &\qw & \qw  &\qw& \qw&\qw& &\cdots& &\qw&\ghost{W_{n_x-1}}&\qw  \\
   & q_{n_x-1}& &\qw & \qw  &\qw& \qw&\qw& &\cdots& &\qw&\ghost{W_{n_x-1}}&\qw \\
}
\end{equation*}
~\\
$W_j(\gamma_1,\tau)$:
\begin{equation*}
    \Qcircuit @C=1.5em @R=1.5em {
   & q_0 & &\qw & \multigate{4}{U_{j}(0)} & \qw&\ctrl{1}&\qw &\multigate{4}{U_{j}(0)}&\qw\\
   & q_1 & &\qw & \ghost{U_{j}(0)} & \qw &\ctrl{1}&\qw&\ghost{U_{j}(0)}&\qw \\
   &\vdots& &\qw & \ghost{U_{j}(0)} &\qw&\ctrl{1}&\qw&\ghost{U_{j}(0)}&\qw \\
   & q_{j-2} & &\qw & \ghost{U_{j}(0)} & \qw &\ctrl{1}&\qw&\ghost{U_{j}(0)}&\qw\\
   & q_{j-1} & &\qw & \ghost{U_{j}(0)} & \qw &\gate{RZ}&\qw&\ghost{U_{j}(0)}&\qw\\
}
\end{equation*}

$U_j(0)$:
\begin{equation*}
    \Qcircuit @C=1.5em @R=1.5em {
   & q_0 &  & \targ & \qw &\qw&\qw&\qw& \qw\\
   & q_1 &  &\qw&\targ &\qw&\qw&\qw&\qw\\
   &\vdots&  & \qw &\qw &\cdots& &\qw&\qw \\
   & q_{j-2}&  &\qw &\qw &\qw&\targ&\qw&\qw\\  
   & q_{j-1}& &\ctrl{-4}&\ctrl{-3}&\qw&\ctrl{-1}&\gate{H}&\qw
}
\end{equation*}
\section{The quantum circuits of $V_2(\tau)$}\label{appendix2}
\hspace{1.5em}In this appendix, we provide the explicit construction of the quantum circuits for the operator $V_2(\tau)$, which approximates the evolution $U_2(\tau) = \exp(i\mathbf{H}_2\tau)$. Based on the Lie-Trotter-Suzuki decomposition derived in Section \ref{section quantum circuits}, the total evolution is decomposed into three primary functional blocks: $U_{22}(\tau)$ and $U_{21}(\tau)$.\par
$V_2(\tau)$:
\begin{equation*}
    \Qcircuit @C=1.5em @R=1.5em {
   & q_0 & &\qw & \multigate{3}{U_{22}(\tau)} & \qw &\multigate{3}{U_{21}(\tau)}&\qw &\qw\\
   & q_1 & &\qw & \ghost{U_{22}(\tau)} & \qw &\ghost{U_{21}(\tau)}&\qw&\qw \\
   &\vdots& &\qw & \ghost{U_{22}(\tau)} & \qw &\ghost{U_{21}(\tau)}&\qw&\qw \\
   & q_{n_x-1} & &\qw & \ghost{U_{22}(\tau)} & \qw &\ghost{U_{21}(\tau)}&\qw&\qw\\
}
\end{equation*}
~\\
$U_{22}(\tau)$:
\begin{equation*}
    \Qcircuit @C=1.5em @R=1.5em {
   & q_0 & &\qw & \gate{W'_1(\gamma_2,\tau)} & \qw &\multigate{1}{W'_2(\gamma_2,\tau)}&\qw &\multigate{3}{W'_j(\gamma_2,\tau)}&\qw&\multigate{5}{W'_{n_x}(\gamma_2,\tau)}&\qw\\
   & q_1 & &\qw & \qw & \qw &\ghost{W'_2(\gamma_2,\tau)}&\qw&\ghost{W'_j(\gamma_2,\tau)}&\qw&\ghost{W'_{n_x}(\gamma_2,\tau)}&\qw \\
   &\vdots& &\qw & \qw & \qw &\qw&\qw&\ghost{W'_j(\gamma_2,\tau)}&\qw&\ghost{W'_{n_x}(\gamma_2,\tau)}&\qw \\
   & q_{j-1} & &\qw & \qw & \qw &\qw&\qw&\ghost{W'_j(\gamma_2,\tau)}&\qw&\ghost{W'_{n_x}(\gamma_2,\tau)}&\qw\\
   & \vdots & &\qw & \qw & \qw &\qw&\qw&\qw&\qw&\ghost{W'_{n_x}(\gamma_2,\tau)}&\qw\\
   & q_{n_x-1} & &\qw &\qw & \qw &\qw&\qw&\qw&\qw&\ghost{W'_{n_x}(\gamma_2,\tau)}&\qw\\
}
\end{equation*}
~\\
$W'_j(\gamma_2,\tau)$:
\begin{equation*}
    \Qcircuit @C=1.5em @R=1.5em {
   & q_0 & &\qw & \multigate{4}{U_j(-\frac{\pi}{2})} & \qw &\gate{X}&\qw &\ctrl{1}&\qw&\gate{X}&\qw&\multigate{4}{U_j(-\frac{\pi}{2})^\dagger}&\qw\\
   & q_1 & &\qw & \ghost{U_j(-\frac{\pi}{2})} & \qw &\gate{X}&\qw &\ctrl{1}&\qw&\gate{X}&\qw&\ghost{U_j(-\frac{\pi}{2})^\dagger}&\qw \\
   &\vdots& &\qw & \ghost{U_j(-\frac{\pi}{2})} & \qw &\gate{X}&\qw &\ctrl{1}&\qw&\gate{X}&\qw&\ghost{U_j(-\frac{\pi}{2})^\dagger}&\qw \\
   & q_{j-2} & &\qw & \ghost{U_j(-\frac{\pi}{2})} & \qw &\gate{X}&\qw &\ctrl{1}&\qw&\gate{X}&\qw&\ghost{U_j(-\frac{\pi}{2})^\dagger}&\qw\\
   & q_{j-1} & &\qw & \ghost{U_j(-\frac{\pi}{2})} & \qw &\qw&\qw &\gate{RZ}&\qw&\qw&\qw&\ghost{U_j(-\frac{\pi}{2})^\dagger}&\qw\\
}
\end{equation*}
~\\
$U_j(-\frac{\pi}{2})$:
\begin{equation*}
    \Qcircuit @C=1.5em @R=1.5em {
   & q_0 &  & \targ & \qw &\qw&\qw&\qw& \qw&\qw&\qw\\
   & q_1 &  &\qw&\targ &\qw&\qw&\qw&\qw&\qw&\qw\\
   &\vdots&  & \qw &\qw &\cdots& &\qw&\qw&\qw&\qw \\
   & q_{j-2}&  &\qw &\qw &\qw&\targ&\qw&\qw&\qw&\qw\\  
   & q_{j-1}& &\ctrl{-4}&\ctrl{-3}&\qw&\ctrl{-1}&\gate{P(-\frac{\pi}{2})}&\qw&\gate{H}&\qw
}
\end{equation*}

\section{Details of Remark \ref{rem: binary}}\label{appendix:remark}
Without loss of generality, after shifting and rescaling the \(p\)-grid, we take
\[
P_{\alpha,l}=l,\qquad l\geq 0,
\]

Writing
\[
l=\sum_{k=0}^{n_p-1}l_k2^k,
\]
the \(l\)-multiplexed gates can be replaced by binary-controlled powers.

First consider the \(V_2\)-part. 
we have
\begin{equation}\label{V2_linear_binary}
\sum_l |l\rangle\langle l|\otimes V_2^{(l)}(\tau)
=
\prod_{k=0}^{n_p-1}
\Lambda_{l_k}
\left[
\tilde{V}_2(\tau)^{2^k}
\right].
\end{equation}
Here $\tilde{V}_2(\tau)=\prod_{\alpha=1}^d
(V_2(\tau))_\alpha$ and \(\Lambda_{l_k}[U]\) denotes the controlled unitary
\[
\Lambda_{l_k}[U]
=
|0\rangle\langle0|_{l_k}\otimes I
+
|1\rangle\langle1|_{l_k}\otimes U .
\]
Thus the factor \(2^k\) in the binary expansion of \(l\) is implemented by a gate
controlled by the qubit \(l_k\).

The circuit for the \(V_2\)-part is
\[
\Qcircuit @C=1em @R=1.25em {
\lstick{q}
& \qw {/}^{dn_x}
& \gate{\tilde{V}_2^{2^0}(\tau)}
& \gate{\tilde{V}_2^{2^1}(\tau)}
& \qw
& \cdots
& 
& \gate{\tilde{V}_2^{2^{n_p-1}}(\tau)}
& \qw
\\
\lstick{l_0}
& \qw
& \ctrl{-1}
& \qw
& \qw
& \cdots
& 
& \qw
& \qw
\\
\lstick{l_1}
& \qw
& \qw
& \ctrl{-2}
& \qw
& \cdots
& 
& \qw
& \qw
\\
\lstick{\vdots}
& \qw
& \qw
& \qw
& \qw
& \ddots
& 
& \qw
& \qw
\\
\lstick{l_{n_p-1}}
& \qw
& \qw
& \qw
& \qw
& \cdots
& 
& \ctrl{-4}
& \qw
}.
\]

Next consider the \(w\)-controlled block
\[
\sum_l |l\rangle\langle l|\otimes V_w^{(l)}(-\tau),
\]\par 
As taking $ P_{\alpha,l}=l$ 
\begin{equation}
    \left(V_1^{(l)}(-\tau)\right)^{r-N_w/2}
=
\prod_{\alpha=1}^d
(V_1(l\tau(r-N_w/2)))_\alpha .
\end{equation}

Using the binary expansions $l=\sum_{k=0}^{n_p-1}l_k2^k$, $r=\sum_{m=0}^{n_w-1}r_m2^m$, we have
\[
l\left(r-\frac{N_w}{2}\right)
=
\left(\sum_{k=0}^{n_p-1}l_k2^k\right)
\left(
\sum_{m=0}^{n_w-1}r_m2^m-2^{n_w-1}
\right)=-\sum_{k=0}^{n_p-1}l_k2^{k+n_w-1}
+
\sum_{k=0}^{n_p-1}
\sum_{m=0}^{n_w-1}
l_kr_m2^{k+m}.
\]

Consequently,
\begin{equation}\label{Vw_linear_double_control}
\begin{aligned}
\sum_l |l\rangle\langle l|\otimes V_w^{(l)}(-\tau)
&=
\prod_{k=0}^{n_p-1}
\Lambda_{l_k}
\left[
\prod_{\alpha=1}^d
(V_1(-2^{k+n_w-1}\tau))_\alpha
\right]
\\
&\quad\times
\prod_{k=0}^{n_p-1}
\prod_{m=0}^{n_w-1}
\Lambda_{l_k,r_m}
\left[
\prod_{\alpha=1}^d
(V_1(2^{k+m}\tau))_\alpha
\right].
\end{aligned}
\end{equation}
Here
\[
\Lambda_{l_k,r_m}[U]
=
(I-|11\rangle\langle11|_{l_kr_m})\otimes I
+
|11\rangle\langle11|_{l_kr_m}\otimes U
\]
denotes the unitary \(U\) controlled jointly by the \(p\)-register qubit \(l_k\) and the \(w\)-register qubit \(r_m\).

The first product in \eqref{Vw_linear_double_control} implements the negative
offset
\[
-l\frac{N_w}{2}
=
-\sum_{k=0}^{n_p-1}l_k2^{k+n_w-1},
\]
while the second product implements the positive term
\[
lr
=
\sum_{k=0}^{n_p-1}
\sum_{m=0}^{n_w-1}
l_kr_m2^{k+m}.
\]

The corresponding circuit can be written compactly as
\[
\Qcircuit @C=1em @R=1.25em {
\lstick{q}
& \qw {/}^{dn_x}
& \gate{L_0}
& \gate{L_1}
& \qw
& \cdots
& 
& \gate{L_{n_p-1}}
& \gate{M_{0,0}}
& \gate{M_{0,1}}
& \qw
& \cdots
& 
& \gate{M_{n_p-1,n_w-1}}
& \qw
\\
\lstick{l_0}
& \qw
& \ctrl{-1}
& \qw
& \qw
& \cdots
& 
& \qw
& \ctrl{-1}
& \ctrl{-1}
& \qw
& \cdots
& 
& \qw
& \qw
\\
\lstick{l_1}
& \qw
& \qw
& \ctrl{-2}
& \qw
& \cdots
& 
& \qw
& \qw
& \qw
& \qw
& \cdots
& 
& \qw
& \qw
\\
\lstick{\vdots}
& \qw
& \qw
& \qw
& \qw
& \ddots
& 
& \qw
& \qw
& \qw
& \qw
& \cdots
& 
& \qw
& \qw
\\
\lstick{l_{n_p-1}}
& \qw
& \qw
& \qw
& \qw
& \cdots
& 
& \ctrl{-4}
& \qw
& \qw
& \qw
& \cdots
& 
& \ctrl{-4}
& \qw
\\
\lstick{r_0}
& \qw
& \qw
& \qw
& \qw
& \cdots
& 
& \qw
& \ctrl{-5}
& \qw
& \qw
& \cdots
& 
& \qw
& \qw
\\
\lstick{r_1}
& \qw
& \qw
& \qw
& \qw
& \cdots
& 
& \qw
& \qw
& \ctrl{-6}
& \qw
& \cdots
& 
& \qw
& \qw
\\
\lstick{\vdots}
& \qw
& \qw
& \qw
& \qw
& \ddots
& 
& \qw
& \qw
& \qw
& \qw
& \cdots
& 
& \qw
& \qw
\\
\lstick{r_{n_w-1}}
& \qw
& \qw
& \qw
& \qw
& \cdots
& 
& \qw
& \qw
& \qw
& \qw
& \cdots
& 
& \ctrl{-8}
& \qw
}.
\]
where
\[
L_k
=
\tilde{V}_1^{-2^{k+n_w-1}}(\tau),
\quad
M_{k,m}
=
\tilde{V}_1^{2^{k+m}}(\tau)
\]
Thus \(L_k\) is controlled by \(l_k\), while \(M_{k,m}\) is jointly controlled by
\(l_k\) and \(r_m\).

\section{Proof of Lemma\ref{lemma_epsilon_G}}\label{appendix3}

We split the error in \eqref{eq:epsilon_omega} into three parts:
\[
\begin{aligned}
\left|
\langle G\rangle(t,\bm{x}_{\bm j})
-
\langle G\rangle_{\bm j,h}(t)
\right|
&\leq
\underbrace{
\left|
\int G(p)\psi(t,\bm{x}_{\bm j},p)\,dp
-
\int G(p)\psi^\omega(t,\bm{x}_{\bm j},p)\,dp
\right|
}_{I}
\\
&\quad+
\underbrace{
\left|
\int G(p)\psi^\omega(t,\bm{x}_{\bm j},p)\,dp
-
h_p\sum_l G(p_l)\psi^\omega(t,\bm{x}_{\bm j},p_l)
\right|
}_{II}
\\
&\quad+
\underbrace{
\left|
h_p\sum_l G(p_l)
\left[
\psi^\omega(t,\bm{x}_{\bm j},p_l)
-
\psi_h^\omega(t,\bm{x}_{\bm j},p_l)
\right]
\right|
}_{III}.
\end{aligned}
\]
where \(\psi^\omega\) is the regularized distribution obtained by replacing $\delta(p-u)$ with $\delta_\omega(p-u)$.
\paragraph{Part I: mollification error.}
The normalized symmetric mollifier satisfies
\[
\int_{\mathbb R}z\delta_\omega(z)\,dz=0,
\qquad
\int_{\mathbb R}z^2|\delta_\omega(z)|\,dz
\leq C\omega^2.
\]
For a single-valued branch, Taylor's theorem and
$G\in W^{2,\infty}(\mathbb R)$ give
\begin{align}
I
&=
\left|
G(u)
-
\int_{\mathbb R}G(u+z)\delta_\omega(z)\,dz
\right| \notag\\
&\leq
\frac12\|G''\|_{L^\infty}
\int_{\mathbb R}z^2|\delta_\omega(z)|\,dz
\leq C_G\omega^2.
\label{eq:proof-mollification}
\end{align}
The same estimate holds branchwise for a smooth multivalued solution,
with the Jacobian weights absorbed into the constant.

\paragraph{Part II: quadrature error in the \(p\)-direction.}
Define
\[
f(p)=G(p)\psi^\omega(t,\bm{x}_{\bm j},p).
\]
Using the midpoint quadrature rule on the \(p\)-grid gives
\[
II
\leq
C h_p^2
\|f''\|_{L^1_p}.
\]
By the product rule,
\[
f''(p)
=
G''(p)\psi^\omega
+
2G'(p)\partial_p\psi^\omega
+
G(p)\partial_{pp}\psi^\omega.
\]
For mollifiers,
\[
\|\delta_\omega\|_{L^1}=O(1),
\qquad
\|\delta_\omega'\|_{L^1}=O(\omega^{-1}),
\qquad
\|\delta_\omega''\|_{L^1}=O(\omega^{-2}).
\]
Therefore,
\[
\|f''\|_{L^1_p}
\leq
C_G
\left(
1+\omega^{-1}+\omega^{-2}
\right)
\leq
C_G\omega^{-2}.
\]
Hence
\[
II
\leq
C_p\frac{h_p^2}{\omega^2}.
\]
\paragraph{Part III: spatial discretization error.}
Let \(\psi_h^\omega\) be the semi-discrete solution obtained by replacing
\(\partial_{x_\alpha}\) with first-order upwind difference operator
\(D^{(\alpha)}_+\) and \(D^{(\alpha)}_-\). The local truncation error satisfies
\[
D^{(\alpha)}_*\psi^\omega
-
\partial_{x_\alpha}\psi^\omega
=
O\left(
h\partial_{x_\alpha x_\alpha}\psi^\omega
\right).
\]
Therefore, for the transport operator
\[
F'(p)\cdot\nabla_{\bm x}
=
\sum_{\alpha=1}^d F_\alpha'(p)\partial_{x_\alpha},
\]
the local residual satisfies
\[
\|\tau_h(s)\|
\le
C d h
\max_{\alpha}
\|\partial_{x_\alpha x_\alpha}\psi^\omega(s)\|.
\]
By Duhamel's formula and stability of the semi-discrete upwind transport
semigroup,
\[
\|\psi_h^\omega(t)-\psi^\omega(t)\|
\le
\int_0^t
\|\tau_h(s)\|\,ds.
\]
Hence
\[
III(t)
\le
C d h
\int_0^t
\max_{\alpha}
\|\partial_{x_\alpha x_\alpha}\psi^\omega(s)\|\,ds.
\]
Using the regularized delta estimate gives
\[
\max_{0\le s\le t}
\max_{\alpha}
\|\partial_{x_\alpha x_\alpha}\psi^\omega(s)\|
\le
C_h\frac{1}{\omega^2},
\]
then
\[
III(t)
\le
C_x \frac{dht}{\omega^2}.
\]

Therefore,
\[
\varepsilon_\omega
\leq
C
\left(
\omega^2
+
\frac{h_p^2}{\omega^2}
+
\frac{d h t}{\omega^2}
\right).
\]
If \(h_p\sim h\), then \(h_p^2/\omega^2\) is lower order than
\(d h/\omega^2\) in the small mesh regime, and hence
\[
\varepsilon
\leq
C
\left(
\omega^2+\frac{d h t}{\omega^2}
\right).
\]
Balancing the two terms yields
\[
\omega\sim(dht)^{1/4},
\qquad
\varepsilon=O((dht)^{1/4}).
\]

\section{Proof of Theorem\ref{thm:totalcomplexity}}\label{appendix4}
The proof proceeds in three stages:
(I) determine the discretization parameters from the error budget,
(II) assemble the gate count, and
(III) substitute the mesh-size scaling into the final complexity estimate.

\paragraph{Stage I: Parameter determination from the error budget.}

By Corollary~\ref{cor: richardson_delta}, the Richardson-corrected delta
approximation gives
\[
\varepsilon_R
\leq
C_R\omega^4.
\]
Requiring \(\varepsilon_R\leq O(\varepsilon)\) gives
\[
\omega=O(\varepsilon^{1/4}).
\]

The same corollary gives the spatial discretization constraint
\[
\varepsilon_x
\leq
C_x\frac{d h T}{\omega^2}.
\]
Thus, requiring \(\varepsilon_x\leq O(\varepsilon)\) yields
\[
h
=
O\left(
\frac{\varepsilon\omega^2}{dT}
\right).
\]
Substituting \(\omega=O(\varepsilon^{1/4})\), we obtain
\[
h
=
O\left(
\frac{\varepsilon^{3/2}}{dT}
\right).
\]
Therefore,
\[
N_x=h^{-1}
=
O\left(
dT\varepsilon^{-3/2}
\right),
\qquad
n_x=\lceil\log_2 N_x\rceil
=
O(\log\varepsilon^{-1}),
\]
up to logarithmic factors in \(d\) and \(T\). Moreover,
\[
\beta_{\max}
=
\frac{\max_{\alpha,l}|F_\alpha'(p_l)|}{2h}
=
O(h^{-1}).
\]

For the \(w\)-discretization and truncation errors, the warped phase transformation uses the function $e^{-|w|}$ on $w \in [-R, R]$. The tail $w > R$ contributes error $\propto e^{-R}$. Requiring $e^{-R} \leq \varepsilon$ gives $R = \mathcal{O}(\log\varepsilon^{-1})$, which can be absorbed into $\widetilde{O}$. 

\paragraph{Stage II: Gate count assembly.}

By Lemma~\ref{lem: gate complexity}, the evolution circuit can be implemented
with
\[
Q_{\rm total}^{(1q)}
=
\widetilde O
\left(
d^2T^2N_pN_w^2h^{-2}\varepsilon^{-1}
\right),
\]
and
\[
Q_{\rm total}^{(2q)}
=
\widetilde O
\left(
d^2T^2N_pN_w^2h^{-2}\varepsilon^{-1}
\right).
\]
with
\[
\beta_{\max}=O(h^{-1}),
\]
The solution recovery circuit contributes only lower-order factors compared with the Hamiltonian evolution circuit and is absorbed into the \(\widetilde O\). 

\paragraph{Stage III: Substitution of the Richardson mesh constraint.}

From Stage I,
\[
h
=
O\left(
\frac{\varepsilon^{3/2}}{dT}
\right).
\]
Hence
\[
h^{-2}
=
O\left(
d^2T^2\varepsilon^{-3}
\right).
\]
Substituting this into the gate count gives
\[
Q_{\rm total}^{(1q)}
=
\widetilde O
\left(
d^2T^2N_pN_w^2
\cdot
d^2T^2\varepsilon^{-3}
\right),
\]
and therefore
\[
Q_{\rm total}^{(1q)}
=
\widetilde O
\left(
d^4T^4N_pN_w^2\varepsilon^{-4}
\right).
\]
The same argument gives
\[
Q_{\rm total}^{(2q)}
=
\widetilde O
\left(
d^4T^4N_pN_w^2\varepsilon^{-4}
\right).
\]
This completes the proof.

\end{document}